\font\bb=msbm10 at 12pt
\newcommand{\mathbb}[1]{\hbox{\bb #1}} 
\def\softt{{\leavevmode\setbox1=\hbox{t}%
\hbox to \wd1{t\kern-0.6ex{\char039}\hss}}}
\newtheorem{claim}{Claim}[section]
\newtheorem{theorem}[claim]{Theorem}
\newtheorem{proposition}[claim]{Proposition}
\newtheorem{lemma}[claim]{Lemma}
\newtheorem{remark}[claim]{Remark}
\newtheorem{remarks}[claim]{Remarks}
\newtheorem{example}[claim]{Example}
\newenvironment{proof}[1][Proof]{\noindent\textbf{#1.} }{\ \rule{0.5em}{0.5em}}
\begin{document}

\title[Resonances on hedgehog manifolds]{Resonances on hedgehog manifolds}

\author{Pavel Exner$^{a,\,}$$^{b,\,}$$^*$, Ji\v r\'i Lipovsk\'y$^{c}$\\ \vspace{6mm}}

\affiliation{
{}$^{a}$Doppler Institute for Mathematical Physics and Applied Mathematics, \hbox{Czech Technical University, B\v{r}ehov\'{a}~7, 11519 Prague, Czechia}\\
{}$^{b}$\hbox{Department of Theoretical Physics, Nuclear Physics Institute AS CR, 25068 \v{R}e\v{z} near Prague, Czechia}\\
\hbox{{}$^{c}$Department of Physics, Faculty of Science, University of Hradec Kr\'{a}lov\'{e},}\\ \hbox{Rokitansk\'{e}ho 62, 50003 Hradec Kr\'{a}lov\'{e}, Czechia}\\
{}$^{*}$corresponding author:  \texttt{exner@ujf.cas.cz}
}

\date{\today}

\begin{abstract}
We discuss resonances for a nonrelativistic and spinless quantum particle confined to a two- or three-dimensional Riemannian manifold to which a finite number of semiinfinite leads is attached. Resolvent and scattering resonances are shown to coincide in this situation. Next we consider the resonances together with embedded eigenvalues and ask about the high-energy asymptotics of such a family. For the case when all the halflines are attached at a single point we prove that all resonances are in the momentum plane confined to a strip parallel to the real axis, in contrast to the analogous asymptotics in some metric quantum graphs; we illustrate it on several simple examples. On the other hand, the resonance behaviour can be influenced by a magnetic field. We provide an example of such a `hedgehog' manifold at which a suitable Aharonov-Bohm flux leads to absence of any true resonance, i.e. that corresponding to a pole outside the real axis.
\end{abstract}

\pacs{03.65.-w:
 hedgehog manifolds, Weyl asymptotics, quantum graphs, resonances}
\maketitle
\thispagestyle{fancy}

\begin{center}

\emph{With congratulations to Professor Miloslav Havl\'{\i}\v{c}ek on the occasion of his 75th birthday}

\end{center} 

\section{Introduction}

Study of quantum systems the configuration space of which is geometrically and topologically nontrivial proved to be a fruitful subject both theoretically and practically. A lot of attention has been paid to quantum graphs -- a survey and a guide to further reading can be found in \cite{BK, EKKST}. Together with that other systems have been studied which one can regard as generalization of quantum graphs where the `edges' may have different dimensions; using the theory of self-adjoint extensions one can construct operator classes which serve as Hamiltonians of such models \cite{ES1}.

One sometimes uses a pictorial term `hedgehog manifold' for a geometrical construct consisting of Riemannian manifolds of dimension two or three together with line segments attached to them. In this paper we consider the simplest situation when we have a single connected manifold to which a finite number of semiinfinite leads are attached --- one is especially interested in transport in such a system. Particular models of this type have been studied, e.g., in \cite{BEG, BG, ES2, ETV, Ki}. Again for the sake of simplicity we limit ourselves mostly to the situation when there are no external fields; the Hamiltonian will act as the negative second derivative on the halflines representing the leads and as Laplace-Beltrami operator on the manifold.

We have said that quantum motion on hedgehog manifolds can be regarded as a generalization of quantum graphs. It is therefore useful to compare similarities and differences of the two cases, and we will recall at appropriate places of the text how the claims look like when the Riemannian manifold is replaced by a compact metric graph.

The first question one has to pose when resonances are discussed is what is meant by this term. The two prominent instances are \emph{resolvent resonances} identified with poles of the analytically continued resolvent of the Hamiltonian and \emph{scattering resonances} where we look instead into the analytical structure of the on-shell scattering operator. While the two often coincide, in general in may not be so; recall that the former are the property of a single operator while the latter refer to the pair of the full and unperturbed Hamiltonians, and often also a third one, an identification operator, which one uses if the two Hamiltonians act on different Hilbert spaces \cite{RS}.

The first question we will thus address deals with the two resonance definitions for quantum motion on hedgehog manifolds. Using an exterior complex scaling we will show that in this case both notions coincide and one is thus allowed to speak about resonances without a further specification. The result is the same as for quantum graphs \cite{EL1, EL2} and, needles to say, in many other situations.

The next question to be addressed in this paper concerns the high energy behaviour of the resonances which is, for this purpose, useful to count together with the eigenvalues. Note that if a hedgehog manifold with a finite number of junctions is compact having finite line segments, its spectrum is purely discrete and an easy estimate yields the spectral behaviour at high energies. It follows the usual Weyl's law \cite{We}, and moreover, it is determined by the manifold component with the highest dimension, that is, in our case the Riemannian manifold \cite{Iv}.

If, on the other hand, the leads are semiinfinite, the essential spectrum covers the positive real axis. In contrast to the usual Schr\"odinger operator theory it often contains embedded eigenvalues; it happens typically when the Laplace-Beltrami operator which is the manifold part of the Hamiltonian has an eigenfunction with zeros at the hedgehog junctions. Since such eigenvalues are unstable --- a geometrical perturbation turns them generically into resonances --- it is natural to count them together with the `true' resonances; one then asks about the asymptotics of the number of such singularities enclosed in the circle of radius $R$ in the momentum plane.

This question is made intriguing by the recent observation \cite{DEL, DP} that in some quantum graphs the asymptotics may not be of Weyl type. The reason behind this effect is that symmetries, maybe not apparent ones, may effectively diminish the graph size making a part of it effectively belonging to a lead instead. The mechanism uses the fact that all the edges of a quantum graph are one-dimensional, and one may expect that such a thing would not happen on hedgehog manifolds where the particles is forced to `change dimension' at the junctions. We are going to give a partial confirmation of this conjecture by showing that, in contrast to the quantum-graph case, the resonances cannot be found at arbitrary distance from real axis in the momentum plane as long as the leads are attached at a single point of the manifold.

The third and the last question addressed here is again inspired by an observation about quantum graphs. It has been noted that a magnetic field can change the effective size of some quantum graphs with a non-Weyl asymptotics \cite{EL3}: if we follow the resonance poles as functions of the field we observe that at some field values they move to (imaginary) infinity and the resonances disappear. On hedgehog manifolds the situation is different, nevertheless a similar effect may again occur; we will present a simple example of such a system in which a suitable Aharonov-Bohm field removes all the `true' resonances, i.e. those with pole position having nonzero imaginary part.

\section{Description of the model}

Let us first give a proper meaning to what we described above as quantum motion on a hedgehog manifold; doing so we generalize previously used definitions --- see, e.g. \cite{BEG, BG, ETV} --- by allowing more than a single semiinfinite lead be attached at a point of the manifold. Consider a compact and connected Riemannian manifold $\Omega \in \mathbb{R}^N,\: N =2,3$, endowed with metric $g_{rs}$. The manifold may or may not have a boundary, in the latter case we suppose that $\partial \Omega$ is smooth.

We denote by $\Gamma$ the geometric object consisting of $\Omega$ and a finite number $n_j$ of halflines attached at points $x_j,\: j=1, \dots, n$ belonging to a finite subset $\{x_j\}$ of the interior of $\Omega$ --- see figure \ref{fig1} --- we will employ the term \emph{hedgehog manifold}, or simply manifold if there is no danger of misunderstanding. By $M = \sum_j n_j$ we denote the total number of the halflines. The Hilbert space we are going to consider consists of direct sum of the `component' Hilbert spaces, in other words, its elements are square integrable functions on every component of $\Gamma$,
 \[
   \mathcal{H} = L^2 (\Omega, \sqrt{|g|} \,\mathrm{d}x) \,\oplus\, \bigoplus_{i=1}^M L^2 \big(\mathbb{R}_+^{(i)}\big)\,,
 \]
where $g$ stands for $\mathrm{det\,}(g_{rs})$ and $\mathrm{d}x$ for Lebesque measure on $\mathbb{R}^N$.

\begin{figure}
   \begin{center}
     \includegraphics{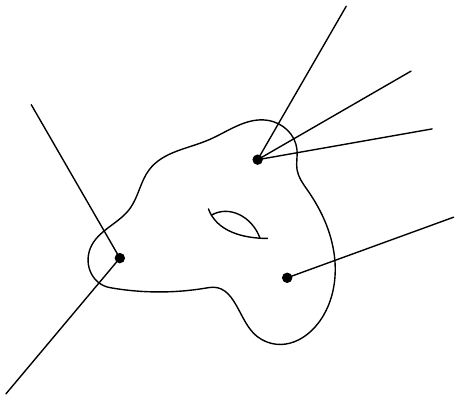}
     \caption{Example of a hedgehog manifold}\label{fig1}
   \end{center}
\end{figure}

Let $H_0$ be the closure of the Laplace-Beltrami operator $- g^{-1/2} \partial_r (g^{1/2} g^{rs} \partial_s)$ with the domain consisting of functions in $C_0^\infty(\Omega)$; if the boundary of $\Omega$ is nonempty  we require that they satisfy at it appropriate boundary conditions, either Neumann/Robin, $(\partial_n +\gamma) f|_{\partial \Omega} = 0$, or Dirichlet, $f|_{\partial \Omega} = 0$. (As mentioned above, we make this assumption for the sake of simplicity and most considerations below extend easily to Schr\"odinger type operators $- g^{-1/2} \partial_r (g^{1/2} g^{rs} \partial_s) +V(x)$ provided the potential $V$ is sufficiently regular.) The domain of $H_0$ coincides with $W^{2,2}(\Omega)$ which, in particular, means that $f(x)$ makes sense for $f\in D(H_0)$ and $x\in\Omega$. The restriction $H_0'$ of $H_0$ to the domain $\{f\in D(H_0):\: f(x_j) = 0,\: j=1,\dots,n\}$ is a symmetric operator with deficiency indices $(n,n)$, cf. \cite{BEG, BG}.
Furthermore, we denote by $H_i$ the negative Laplacian on $L^2(\mathbb{R}_+^{(i)})$ referring to the $i$-th halfline and by $H_i'$ its restriction to functions which vanish together with their first derivative at the halfline endpoint. Since each $H_i'$ has deficiency indices $(1,1)$, the direct sum $H' = H_0' \oplus H_1' \oplus \cdots \oplus H_M'$ is a symmetric operator with deficiency indices $(n+M,n+M)$.

The family of admissible Hamiltonians for quantum motion on the hedgehog manifold $\Gamma$ can be identified with the self-adjoint extensions of the operator $H'$. The procedure how to construct them  using the boundary-value theory was described in detail in \cite{BG}. It is a modification of the analogous result from the quantum graph theory \cite{Ha}, and in a broader context of a known general result \cite{GG}. All the extensions are described by the coupling conditions
 \begin{equation}
   (U -I) \Psi + i (U+I) \Psi' = 0\,, \label{eq-coupling}
 \end{equation}
where $U$ is an $(n+M)\times (n+M)$ unitary matrix, $I$ the corresponding unit matrix and $\Psi = (d_1(f), \dots ,d_n(f), f_1(0),\dots , f_n(0))^\mathrm{T}$, $\Psi' = (c_1(f), \dots , c_n(f), f_1'(0),\dots , f_n'(0))^\mathrm{T}$ are the columns of (generalized) boundary values. The first $n$ entries correspond to the manifold part being equal to the leading and next-to-leading terms of the asymptotics of $f(x)$ on $\Omega$ in the vicinity of $x_j$, while $f_i(0), f_i'(0)$ describe the limits of the wave function and its first derivative on $i$-th halfline, respectively. More precisely, according to Lemma~4 in \cite{BG}, for $f \in D(H_0^*)$ the asymptotic expansion near $x_j$ has the form
$f(x) = c_j(f) F_0 (x, x_j) + d_j(f) + \mathcal{O}(r(x,x_j))$, where
  \begin{equation}
   F_0 (x, x_j) = \left\{\begin{array}{lll}
-\frac{q_2(x,x_j)}{2 \pi}\, \ln{r(x,x_j)} & \quad \dots \quad & N = 2 \\ [.5em]
\frac{q_3(x,x_j)}{4 \pi}\, (r(x,x_j))^{-1} & \quad \dots \quad & N = 3\\
\end{array}\right.\label{eq-asym}
  \end{equation}
Here $q_2, q_3$ are continuous functions of $x$ with $q_i(x_j, x_j) = 1$ and $r(x,x_j)$ denotes the geodetic distance between $x$ and $x_j$. Function $F_0$ is the leading term, independent of energy, of the Green's function asymptotics near $x_j$, i.e.
 \[
   G(x,x_j;k) = F_0(x,x_j) + F_1(x,x_j;k) + R(x,x_j;k)
 \]
with the remainder term $R(x,x_j;k) =\mathcal{O}(r(x,x_j))$. The self-adjoint extension of $H'$ determined by the condition (\ref{eq-coupling}) will be denoted as $H_U$; we will drop the subscript if the choice of $U$ is either clear from the context or not important.

Not all self-adjoint extensions, however, make in general sense from the physics point of view. The reason is that for $n>1$ one finds among them such extensions which would allow the particle living on $\Gamma$ to hop from a junction to another one. We restrict our attention in what follows to \emph{local} couplings for which such a situation cannot occur. They are described by matrices which are block diagonal, so that such a $U$ does not connect disjoint junction points $x_j$. The coupling condition (\ref{eq-coupling}) is then a family of $n$ conditions, each referring to a particular $x_j$ and coupling the corresponding (sub)columns $\Psi_j$ and $\Psi_j'$ by means of the respective block $U_j$ of $U$.

Before proceeding further we will mention a useful trick, known from quantum-graph theory \cite{EL2}, which allows to study a compact scatterer with leads looking at its `core' alone replacing the leads by effective coupling at the points $x_j$ which is a non-selfadjoint, energy-dependent point interaction, namely
 \begin{equation}
   (\tilde U_j(k) - I) d_j(f) + i (\tilde U_j(k) + I) c_j(f) = 0\,,\label{eq-coupling2}
 \end{equation}
 \begin{multline*}
   \tilde U_j(k) = U_{1j} -(1-k) U_{2j} [(1-k) U_{4j} \\
   - (k+1) I]^{-1} U_{3j}\,,
 \end{multline*}
where $U_{1j}$ denotes the top-left entry of $U_j$, $U_{2j}$ the rest of the first row, $U_{3j}$ the rest of the first column and $U_{4j}$ is $n_j\times n_j$ part corresponding to the coupling between the halflines attached to the manifold. This can be easily checked using the standard argument ascribed, in different fields, to different people such as Schur, Feshbach, Grushin, etc.

\section{Scattering and resolvent resonances}\label{sec-2}

The model described in the previous section provides a natural framework to study scattering on the hedgehog manifold. The existence of scattering is easy to establish because any Hamiltonian of the considered class differs from one with decoupled leads by a finite-rank perturbation in the resolvent \cite{RS}. Finding the on-shell scattering matrix is computationally slightly more complicated but simple in principle. The solution of Schr\"odinger equation on the $j$-th external lead with energy $k^2$ can be expressed as a linear combination of the incoming and outgoing waves, $a_j(k) \mathrm{e}^{-ikx}+b_j(k) \mathrm{e}^{ikx}$. The scattering matrix then maps the vector of incoming wave amplitudes $a_j$ into the vector of outgoing wave amplitudes $b_j$.

We emphasize that our convention, which is natural in this context and analogous to the one used in quantum-graph theory \cite{KS1} differs from the one employed when scattering on the real line is treated \cite{TZ}, in that each lead is identified with the positive real halfline. For the case of two leads, in particular, it means that columns of the $2\times 2$ scattering matrix are interchanged and we have
 \begin{lemma}\label{l1}
 The on-shell scattering matrix satisfies
 \[
   S(k)^{-1} = S(-k) = S^*(\bar k)\,,
 \]
where star and bar denote the Hermitian and complex conjugation, respectively.
 \end{lemma}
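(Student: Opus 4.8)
The plan is to build the scattering solutions of $H_U$ explicitly and then read off the two equalities from, respectively, the $k\mapsto-k$ symmetry of the wave equation and from flux conservation. First I would write down the generalized eigenfunction at momentum $k$: on the $i$-th lead it equals $a_i\mathrm{e}^{-ikx}+b_i\mathrm{e}^{ikx}$, hence $f_i(0)=a_i+b_i$ and $f_i'(0)=ik(b_i-a_i)$, while on $\Omega$ the manifold part is a combination $f=\sum_{j=1}^n c_j\,G(\cdot,x_j;k)$ of Green's functions whose asymptotics (\ref{eq-asym}) give $c_j(f)=c_j$ and $d_j(f)=\sum_l Q_{jl}(k)\,c_l$, where $Q(k)$ is the matrix built from the regular part of $G$ (the Krein $Q$-function of $H_0'$), meromorphic in $k$. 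Feeding these boundary values into the coupling condition (\ref{eq-coupling}) produces a linear system for $(b,c)$ in terms of the incoming data $a$; its generically unique solution defines the on-shell matrix through $b=S(k)a$, itself a meromorphic matrix function of $k$.

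The identity $S(-k)=S(k)^{-1}$ is then purely algebraic. Both $H_0^*$ and $-\mathrm{d}^2/\mathrm{d}x^2$ depend on the spectral parameter only through $k^2$, and (\ref{eq-coupling}) does not involve $k$ at all, so in particular $Q(-k)=Q(k)$; consequently a fixed function $\psi$ solving the matching problem at momentum $k$ solves it equally at momentum $-k$, the only difference being that on every lead the coefficients of $\mathrm{e}^{ikx}$ and $\mathrm{e}^{-ikx}$ exchange their roles as outgoing and incoming amplitudes. Hence $S(-k)$ sends $b=S(k)a$ back to $a$, which gives $S(-k)S(k)=I$; note that this step uses nothing about $U$.

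For the identity $S^*(\bar k)=S(k)^{-1}$ I would first prove unitarity on the real axis and then continue analytically: both $S(-k)=S(k)^{-1}$ and $S^*(\bar k)$ are meromorphic in $k$, so agreement for $k\in\mathbb{R}$ forces agreement everywhere. For real $k$, unitarity of $S(k)$ is flux conservation. Since $U$ is unitary, the coupling (\ref{eq-coupling}) makes the boundary form vanish, $\mathrm{Im}\,\langle\Psi,\Psi'\rangle=0$; splitting this into its manifold and lead parts, the manifold term $\mathrm{Im}\,\langle d,c\rangle=\mathrm{Im}\,\langle Q(k)c,c\rangle$ vanishes for real $k$ because $H_0$ is self-adjoint (equivalently $Q(k)$ is Hermitian on the real axis, as $G(x,y;\bar k)=\overline{G(x,y;k)}$ and $G(x,y;k)=G(y,x;k)$), while the lead term is proportional to $\|a\|^2-\|b\|^2$. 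Therefore $\|S(k)a\|=\|a\|$ for every $a$, i.e.\ $S(k)$ is unitary, and the claimed identity follows by continuation.

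The work is mostly bookkeeping rather than conceptual: one must keep track of the convention identifying each lead with $\mathbb{R}_+$ (so that an amplitude which is ``incoming'' at momentum $k$ becomes ``outgoing'' at momentum $-k$), pin down the precise form of the functionals $c_j,d_j$ and of the matrix $Q(k)$ in the Green's-function expansion, and---the one genuinely delicate point---justify the vanishing of the manifold boundary form for $f\in D(H_0^*)\setminus D(H_0)$, taking due care at the discrete set of real $k$ for which $k^2\in\sigma_{\mathrm{p}}(H_0)$ (these are harmless for the meromorphic continuation, being exactly the embedded eigenvalues discussed later). With these points settled the two equalities follow as above; the computation is the hedgehog-manifold analogue of the quantum-graph result in \cite{EL2,KS1}.
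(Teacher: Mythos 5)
Your proposal is correct, and for the first identity it is exactly the paper's argument (made explicit): absence of a potential means the equation depends on $k$ only through $k^2$, the coupling condition (\ref{eq-coupling}) and the matrix $Q$ are even in $k$, so the same solution read at $-k$ has the incoming and outgoing amplitudes interchanged, whence $S(-k)S(k)=I$. For the second identity the paper only says it follows ``in a similar way,'' which presumably means conjugating the external solutions: $f\mapsto\bar f$ maps a solution at $\bar k$ to a solution at $k$ while swapping and conjugating the amplitudes, giving $\overline{S(\bar k)}\,S(k)=I$ pointwise in $k$; note that to turn the complex conjugate into the adjoint $S^*(\bar k)$ one then still needs the reciprocity $S^\mathrm{T}=S$ coming from the reality of the Hamiltonian. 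Your route is different and arguably cleaner on this point: you prove unitarity on the real axis from the vanishing of the boundary form (using unitarity of $U$ for the lead part and Hermiticity of $Q(k)$, i.e.\ $G(x,y;\bar k)=\overline{G(y,x;k)}$, for the manifold part) and then extend by meromorphic continuation, which avoids the transposition issue entirely at the price of having to know in advance that $S$ is meromorphic --- which your explicit Green's-function construction does supply. The one point to state carefully is the continuation step: the two meromorphic functions agree on $\mathbb{R}$ minus a discrete exceptional set, which is enough since that set has no interior accumulation of poles; with that said, the argument is complete.
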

 \begin{proof}
The claim follows directly from the definition of scattering matrix and from properties of Schr\"odinger equation and its external solutions. Since the potential is absent we infer that if $f(k_0,x)$ is a solution of the Schr\"odinger equation for a given $k$, so is $f(-k_0,x)$. This means that $S(k)$ can be regarded both as an operator mapping $\{a_j(k)\}$ to $\{b_j(k)\}$ or as a map from $\{b_j(-k)\}$ to $\{a_j(-k)\}$, i.e. as the inverse of $S(-k)$. In an similar way one can establish the second identity.
 \end{proof}

 \begin{remark}\label{rem1}
{\rm If $\Omega$ is replaced by a compact metric graph and the potential is again absent, the S-matrix can be written as $S(k) = - F(k)^{-1} \cdot F(-k)$ where the $M\times M$ matrix $F(k)$ is an analogue of Jost function.}
 \end{remark}

\noindent In particular, $S(k)$ is unitary for $k\in\mathbb{R}$, however, we will need it also for complex values of $k$. By a \emph{scattering resonance} we conventionally understand a pole of the on-shell scattering matrix in the complex plane, more precisely, the point at which some of its entries have a pole singularity.

A \emph{resolvent resonance}, on the other hand, is identified with a pole of the resolvent analytically continued from the upper complex halfplane to a region of the lower one. A convenient and efficient way of treating resolvent resonances is the method of exterior complex scaling based on the ideas of  Aguilar, Baslev, Combes, and Simon --- cf.~\cite{AC, BC, Si},  for a recent application to the case of quantum graphs see \cite{DP, EL1, EL2}). Resonances in this approach become eigenvalues of the non-selfadjoint operator $\mathrm{H}_\theta = U_\theta \mathrm{H} U_{\theta}^{-1}$ obtained by scaling the Hamiltonian outside a compact region with the scaling parameter taking a complex value $\mathrm{e}^{\theta}$; if $\mathrm{Im\,}\theta$ is large enough, the rotated essential spectrum reveals a part of the `unphysical' sheet with the poles being now true eigenvalues corresponding to square integrable eigenfunctions.

In the present case we identify, in analogy with the quantum-graph situation mentioned above, the exterior part of $\Gamma$ with the leads, and scale the wave function at each them using the transformation $(U_\theta f) (x) = \mathrm{e}^{\theta/2} f(\mathrm{e}^{\theta}x)$ which is, of course, unitary for real $\theta$, while for a complex $\theta$ it leads to the desired rotation of the essential spectrum. To use it we first state a useful auxiliary result.

 \begin{lemma}\label{lem-green}
Let $H|_\Omega$ be the restriction of an admissible Hamiltonian to $\Omega$ and suppose that $f(\cdot,k)$ satisfies $H|_\Omega f(x,k) = k^2 f(x,k)$ for $k^2 \not \in \sigma(H_0)$, then it can be written as a particular linear combination of Green's functions of $H_0$, namely
  \[
    f(x,k) =  \sum_{j=1}^{n} c_j G(x,x_j;k)\,.
  \]
 \end{lemma}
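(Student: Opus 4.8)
The claim is essentially a uniqueness-of-representation statement for solutions of the Helmholtz equation on $\Omega$ that are regular enough to lie in the domain of $H_0^*$. The plan is to argue that any such $f(\cdot,k)$ can differ from a suitable combination $\sum_j c_j G(\cdot,x_j;k)$ only by a function lying in the range of $H_0 - k^2$ applied to genuine $D(H_0)$-functions, and that a Helmholtz solution of this kind with no singularities at the $x_j$ must vanish when $k^2\notin\sigma(H_0)$.

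\medskip

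\noindent First I would recall that $f\in D(H|_\Omega)\subset D(H_0^*) = D(H_0') {}^*$, so by Lemma~4 of \cite{BG} (quoted above) $f$ has, near each $x_j$, the asymptotic expansion $f(x) = c_j(f)F_0(x,x_j) + d_j(f) + \mathcal{O}(r(x,x_j))$; set $c_j := c_j(f)$. The Green's function $G(\cdot,x_j;k)$ satisfies $(H_0 - k^2)G(\cdot,x_j;k) = \delta_{x_j}$ in the distributional sense and has, by the displayed expansion $G = F_0 + F_1 + R$, precisely the same leading singularity $F_0(x,x_j)$ at $x_j$. Hence the difference
\[
  h := f(\cdot,k) - \sum_{j=1}^{n} c_j\, G(\cdot,x_j;k)
\]
has vanishing $c_j$-coefficient at every junction, i.e. $h(x) = d_j(h) + \mathcal{O}(r(x,x_j))$ near each $x_j$, which means $h$ extends to a function with no local singularities and in fact $h\in D(H_0)$ — more carefully, $h\in W^{2,2}$ locally away from the $x_j$ and the controlled behaviour at the $x_j$ places it in $D(H_0')$-closure, hence in $D(H_0)$ since the $c_j(h)$ all vanish.

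\medskip

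\noindent Next, on $\Omega\setminus\{x_j\}$ one has $(H|_\Omega - k^2)f = 0$ by hypothesis and $(H_0 - k^2)G(\cdot,x_j;k) = 0$ there as well, so $(H_0 - k^2)h = 0$ away from the junctions; combined with $h\in D(H_0)$ and the fact that $\delta$-type contributions have been subtracted off, this gives $(H_0 - k^2)h = 0$ as an identity in $\mathcal{H}$, i.e. $h$ is an eigenfunction of the self-adjoint operator $H_0$ with eigenvalue $k^2$. Since $k^2\notin\sigma(H_0)$ by assumption, $h = 0$, which is exactly the asserted representation. The boundary conditions on $\partial\Omega$ enter only to make $H_0$ self-adjoint and to ensure $G(\cdot,x_j;k)$ satisfies them too, so $h$ does as well and the argument is unaffected.

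\medskip

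\noindent The step I expect to require the most care is the passage ``$h$ has vanishing singular coefficients at all $x_j$ $\Rightarrow$ $h\in D(H_0)$ and $(H_0-k^2)h=0$ globally.'' One must check that removing the $F_0$-singularities really lands $h$ in the operator domain rather than merely in a local Sobolev space, and that no spurious distributional term supported at the $\{x_j\}$ survives in $(H_0 - k^2)h$; this is where the precise form of the expansion \eqref{eq-asym} and the characterization of $D(H_0) = W^{2,2}(\Omega)$ (with the stated boundary conditions) are used. Everything else is bookkeeping with the known asymptotics and the spectral theorem for $H_0$.
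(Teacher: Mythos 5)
Your proposal is correct and follows essentially the same route as the paper: the paper writes $\mathrm{Dom}\,H'^* = W^{2,2}(\Omega)\oplus\mathrm{Span}\{G(\cdot,x_j;k)\}$ (citing Lemma~2.2 of \cite{Ki}), decomposes $f=\sum_j c_j G(\cdot,x_j;k)+g$ with $g\in W^{2,2}(\Omega)=D(H_0)$, and concludes $g=0$ from $k^2\notin\sigma(H_0)$ --- exactly your subtraction of the singular parts followed by the spectral argument. Your explicit matching of the $F_0$-singularities is just a hands-on derivation of the domain decomposition the paper invokes, so the two proofs coincide in substance.
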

 \begin{proof}
The claim is a straightforward generalization of Lemma~2.2 in \cite{Ki} to the situation where \mbox{$n>2$} and more general couplings are imposed at the junctions. Suppose that $k$ is not an eigenvalue of $H_0$. The Green's functions with one argument fixed at different points $x_j$ are clearly linearly independent, hence $\mathrm{Dom\,}H'^* = W^{2,2}(\Omega) \oplus (\mathrm{Span\,}\{ G(x,x_j;k)\}_{j=1}^n)$, and without loss of generality one can write $f(x,k) =  \sum_{j=1}^n c_j G(x,x_j;k) + g(x)$ with $g\in W^{2,2}(\Omega)$. However, then we would have $H|_\Omega g(x) = H_0 g(x) = k^2 g(x)$ for $x\ne x_j$, and since $k^2 \not \in \sigma(H_0)$ and $g\in W^{2,2}(\Omega)$ by assumption, it follows that $g=0$.
 \end{proof}

 \begin{theorem}
In the described setting, the hedgehog system has a scattering resonance at $k_0$ with $\mathrm{Im\,}k_0<0$ and $k_0^2 \not \in \mathbb{R}$ \emph{iff} there is a resolvent resonance at $k_0$. Algebraic multiplicities of the resonances defined in both ways coincide.
 \end{theorem}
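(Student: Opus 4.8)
The plan is to show that, for such a $k_0$, both notions collapse to one and the same condition: the existence of a non-zero function $f$ with $H|_\Omega f=k_0^2 f$ on the manifold, equal to a purely outgoing wave $b_i\,\mathrm{e}^{ik_0 x}$ on the $i$-th lead, and whose generalized boundary values $\Psi,\Psi'$ satisfy the coupling relation (\ref{eq-coupling}). Because $\sigma(H_0)\subset\mathbb{R}$ and $k_0^2\notin\mathbb{R}$, we have $k_0^2\notin\sigma(H_0)$, so Lemma~\ref{lem-green} applies and the manifold part of such an $f$ must be $\sum_{j=1}^n c_j\,G(\cdot,x_j;k_0)$; matching the expansion $G=F_0+F_1+R$ against the asymptotics (\ref{eq-asym}) gives $c_j(f)=c_j$ and $d_j(f)=\sum_l Q_{jl}(k_0)\,c_l$, where the matrix $Q(\cdot)$, assembled from $F_1(x_j,x_j;\cdot)$ and $G(x_j,x_l;\cdot)$ with $l\neq j$, is holomorphic near $k_0$. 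Substituting this together with the outgoing ansatz into (\ref{eq-coupling}) turns the condition into a homogeneous linear system $\mathcal{A}(k_0)\,(c_1,\dots,c_n,b_1,\dots,b_M)^{\mathrm{T}}=0$, with $\mathcal{A}(\cdot)$ an $(n+M)\times(n+M)$ matrix holomorphic near $k_0$; equivalently one may first eliminate the $b_i$ by the effective point interaction $\tilde U_j(k)$ of (\ref{eq-coupling2}) and reach an $n\times n$ characteristic matrix. In either form, a resonance function exists iff $\det\mathcal{A}(k_0)=0$.

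For the resolvent side I would use the exterior complex scaling of the leads, $\mathrm{H}\mapsto\mathrm{H}_\theta$, precisely as in the quantum-graph analysis \cite{DP, EL1, EL2}: for $\mathrm{Im}\,\theta$ large enough (depending on $\arg k_0$) the point $k_0^2$ is exposed on the second sheet, and there the resolvent resonances of $\mathrm{H}$ are exactly the eigenvalues of $\mathrm{H}_\theta$, with $\theta$-independent algebraic multiplicities. Since the leads carry no potential, an eigenfunction of $\mathrm{H}_\theta$ at $k_0^2$ is, on the core, a solution of $H|_\Omega f=k_0^2 f$ --- hence of the form furnished by Lemma~\ref{lem-green} --- while on the $i$-th lead it is the scaled outgoing exponential, which is square integrable exactly because $\mathrm{Im}(k_0\mathrm{e}^{\theta})>0$; the junction matching is governed by the same coupling relation (\ref{eq-coupling}), now with the lead solutions on the outgoing branch $b_i\,\mathrm{e}^{ik_0 x}$, i.e. the very system $\mathcal{A}(k_0)(c,b)^{\mathrm{T}}=0$ of the previous paragraph. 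Hence $k_0$ is a resolvent resonance iff $\det\mathcal{A}(k_0)=0$, and the algebraic multiplicity equals the order of this zero, by the Krein-type resolvent formula for $\mathrm{H}_\theta$, exactly as in \cite{EL1, EL2}.

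For the scattering side I would construct $S(k)$ near $k_0$ from the same ingredients: the ansatz $a_i\,\mathrm{e}^{-ikx}+b_i\,\mathrm{e}^{ikx}$ on the $i$-th lead, Lemma~\ref{lem-green} on $\Omega$, and the conditions (\ref{eq-coupling}); this yields the inhomogeneous system $\mathcal{A}(k)(c,b)^{\mathrm{T}}=\mathcal{C}(k)\,a$ with $\mathcal{C}(\cdot)$ holomorphic, whence $b=S(k)\,a$ with $S(k)$ the lower $M\times M$ block of $-\mathcal{A}(k)^{-1}\mathcal{C}(k)$; its homogeneous counterpart describing resonance functions is just the case $a=0$, with the same $\mathcal{A}$. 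Thus $S$ is meromorphic near $k_0$, can have a pole only on the zero set of $\det\mathcal{A}$, and does have a pole at $k_0$ when $\det\mathcal{A}(k_0)=0$, a kernel vector of $\mathcal{A}(k_0)$ producing a non-trivial purely outgoing solution with vanishing incoming amplitudes, which is incompatible with $S$ being finite there. This already proves the equivalence of the two resonance notions. For the multiplicity assertion one has to show that the pole order of $S$ at $k_0$ (equivalently $-\,\mathrm{ord}_{k_0}\det S$) again equals $\mathrm{ord}_{k_0}\det\mathcal{A}$; here I would use that, since $k_0^2\notin\mathbb{R}$, the reflected point $-k_0$ lies in the open upper $k$-halfplane off the imaginary axis, a region free of resolvent resonances, so that $\mathcal{A}(-k_0)$ is invertible; combining this with the factorization $S(k)=-\mathcal{A}(k)^{-1}\mathcal{C}(k)$ and the symmetry $S(k)^{-1}=S(-k)$ from Lemma~\ref{l1} pins the pole order of $S$ at $k_0$ to $\mathrm{ord}_{k_0}\det\mathcal{A}$.

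The step I expect to be the main obstacle is precisely this last bookkeeping --- identifying the algebraic multiplicity of the scaled eigenvalue, the vanishing order of the characteristic determinant, and the null multiplicity of the scattering matrix with one and the same integer, and in particular ruling out accidental cancellations in $\mathcal{A}(k)^{-1}$ and in the lower block of $\mathcal{A}(k)^{-1}\mathcal{C}(k)$. The rest --- the two reductions to the characteristic matrix $\mathcal{A}(k)$ via Lemma~\ref{lem-green} and the boundary-value calculus (\ref{eq-coupling})--(\ref{eq-asym}), together with the $\theta$-independence of the resolvent resonances --- is routine and runs parallel to the metric-graph case of \cite{EL1, EL2}.
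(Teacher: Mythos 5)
Your proposal is correct and follows essentially the same route as the paper: Lemma~\ref{lem-green} reduces everything to a finite linear system built from Green's functions and the coupling conditions (\ref{eq-coupling}), exterior complex scaling of the leads turns the resolvent poles into eigenvalues with purely outgoing ($a=0$) lead components, and self-adjointness of $H$ excludes the spurious vanishing of the ``incoming'' block, so both notions collapse to the vanishing of one determinant. The only cosmetic difference is that you keep the full $(n+M)\times(n+M)$ system in $(\mathbf{c},\mathbf{b})$ where the paper first eliminates $\mathbf{c}$ to reach $M\times M$ matrices with $S(k)=-\tilde B(k)^{-1}\tilde A(k)$, and you spell out the multiplicity bookkeeping (via $S(k)^{-1}=S(-k)$ and invertibility at $-k_0$) somewhat more explicitly than the paper does.
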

 \begin{proof}
Consider first the scattering resonances. The starting point is the generalized eigenfunction describing the scattering at the energy $k^2$ and its analytical continuation to the lower complex halfplane. From the previous lemma we know that for $k^2 \not \in \sigma(H_0)$ the restriction of the appropriate Schr\"odinger equation solution to the manifold is a linear combination of at most $n$ Green's functions; we denote the corresponding vector of coefficients by $\mathbf{c}$. The relation between these coefficients and amplitudes of the outgoing and incoming wave is given by (\ref{eq-coupling}). Using $\mathbf{a}$ as a shortcut for the vector of the amplitudes of the incoming waves, $(a_1(k), \dots, a_M(k))^\mathrm{T}$, and similarly $\mathbf{b}$ for the vector of the amplitudes of the outgoing waves one obtains in general system of equations
 \begin{equation} \label{scattres}
  A(k)  \mathbf{a} + B(k)  \mathbf{b} + C(k)\mathbf{c}  = 0\,,
 \end{equation}
in which $A$ and $B$ are $(n+M) \times M$ matrices and $C$ is $(n+M) \times n $ matrix the elements of which are exponentials and Green's functions, regularized if needed --- recall that $n$ is the number of internal parameters associated with the junctions and $M$ is the number of the leads. What is important that all the entries of the mentioned matrices allow for an analytical continuation which makes it possible to ask for solution of equations (\ref{scattres}) for $k=k_0$ from the open lower complex halfplane.

It is obvious that for $k_0^2 \not \in \mathbb{R}$ the columns of $C(k_0)$ have to be linearly independent; otherwise $k_0^2$ would be an eigenvalue of $H$ with an eigenfunction supported on the manifold $\Omega$ only. Hence there are $n$ linearly independent rows of $C(k_0)$ and after a rearrangement in equations (\ref{scattres}) one is able to express $\mathbf{c}$ from the first $n$ of them. Substituting then to the remaining equations one can rewrite them in the form
 \begin{equation}
   \tilde A(k_0)  \mathbf{a} + \tilde B(k_0)  \mathbf{b} = 0 \label{eq2}
 \end{equation}
with $\tilde A(k_0)$ and $\tilde B(k_0)$ being  $M\times M$ matrices the entries of which are rational functions of the entries of the previous ones. Suppose that $\mathrm{det\,}\tilde A(k_0) = 0$, then there exist a solution of the previous equation with $\mathbf{b} = 0$, and consequently, $k_0$ is an eigenvalue of $H$ since $\mathrm{Im\,}k_0<0$ and the corresponding eigenfunction belongs to $L^2$, however, this contradicts to the self-adjointness of $H$. Now it is sufficient to notice that the S-matrix analytically continued to the point $k_0$ equals $-\tilde B(k_0)^{-1} \tilde A(k_0)$ hence its pole singularities are solutions of the equation $\det \tilde B(k) = 0$.

Let us turn to resolvent resonances and consider the exterior complex scaling transformation $U_\theta$ with $\mathrm{Im\,}\theta>0$ large enough to reveal the sought pole on the second sheet of the energy surface. Choosing $\mathrm{arg\,}\theta > \mathrm{arg\,}k_0$ we find that the solution $a_j(k) \mathrm{e}^{-ikx}$ on the $j$-th lead, analytically continued to the point $k=k_0$, is after the transformation by $U_\theta$ exponentially increasing, while $b_j(k) \mathrm{e}^{ikx}$ becomes square integrable. This means that solving in $L^2$  the eigenvalue problem for the non-selfadjoint operator $\mathbf{H}_\theta$ obtained from $\mathbf{H}=H_U$ one has to find solutions of (\ref{eq2}) with \mbox{$\mathbf{a} = 0$}. This leads again to the condition  $\mathrm{det\,}\tilde B(k) = 0$ concluding thus the proof.
 \end{proof}

 \begin{remarks}{\rm
(a) It may happen, of course, that at some junctions the leads are disconnected from the manifold since the conditions (\ref{eq-coupling}) are locally separating and define a point interaction at those points, or that a junction coincides with a zero of an eigenfunction of $H_0$. In such situations it may happen that $H$ has eigenvalues, either positive, embedded into the continuous spectrum, or negative ones. In terms of the momentum variable $k$ these eigenvalues appear in pairs symmetric w.r.t. the origin.

\smallskip

\noindent (b) In the case of separating conditions (\ref{eq-coupling}) it may happen that the complex-scaled operator $\mathbf{H}_\theta$ has an eigenvalue in $k_0^2$  with eigenfunction supported outside $\Omega$, then $k_0$ is also a pole of $S(k)$; poles multiplicities of the resolvent and the scattering matrix may differ in this situation.

\smallskip

\noindent (c) In the quantum-graph analogue when $\Omega$ is replaced by a compact metric graph the decomposition of Lemma~\ref{lem-green} cannot be used since the deficiency indices of $H_0'$ may, in general, exceed $n$ and one can have extensions with the wave functions discontinuous at the junctions. The role of the internal parameters is instead played by the coefficients of two linearly independent solutions on each (internal) edge.
 }\end{remarks}
 
\section{Resonance asymptotics}

The aim of this section is to say something about the asymptotic behaviour of the resolvent poles with respect to an increasing family of regions which cover in the limit the whole complex plane. Using Lemma~\ref{lem-green} let us write the manifold part $f(\cdot,k)$ of a function from the deficiency subspace of $H'$ as a linear combination of Green's functions of $H_\Omega$, acting as negative Laplace-Beltrami operator on $\Omega$. For $k^2 \not \in \sigma(H_\Omega)$  and $x$ in the vicinity of the point $x_i$ we then have
  \begin{multline*}
    f(x,k) = \sum_{j= 1}^n c_j G(x,x_j;k) = c_i F_0(x,x_i) \\
    + c_i F_1(x,x_i;k) + \sum_{i\ne j= 1}^n c_j G(x_i,x_j;k) + \mathcal{O}(r(x,x_i))
  \end{multline*}
which makes it easy to find the generalized boundary values $c_i(f)$ and $d_i(f)$ to be inserted into the coupling conditions (\ref{eq-coupling}), or effective conditions (\ref{eq-coupling2}).

We will employ the latter with matrix $\tilde U(k) = \mathrm{diag\,}(\tilde U_1(k),\dots,\tilde U_n(k))$ whose blocks correspond to junctions of $\Gamma$. We introduce
  \[
    Q_0 (k) = \left\{\begin{array}{lll}G(x_i,x_j;k) & \quad \dots & i \ne j \\[.3em] F_1(x_i,x_i;k) & \quad \dots & i = j
\end{array}\right.
  \]
which allows us to write $\textbf{d}(f) = Q_0(k) \textbf{c}$, and substituting into (\ref{eq-coupling2}) we can write the solvability of the system which determines the resonances as
  \begin{equation}
    \det\left[(\tilde U(k) - I) Q_0(k)+ i (\tilde U(k) + I)\right] = 0\,. \label{eq-rescon1}
  \end{equation}
We note that the matrices $\tilde U_j(k)$ entering this condition may be singular, however, this may happen for at most $M$ values of $k$, taking all the conditions together.

We also note that if the Hamiltonian $\mathbf{H}$ has an eigenvalue $k^2$ embedded in its continuous spectrum covering the interval $\mathbb{R}^+$, the corresponding $k>0$ also solves the equation (\ref{eq-rescon1}). Hence, as we have indicated in the introduction, from now on --- for purpose of this section --- we will include such embedded eigenvalues among resonances.

Having formulated the resonance condition we can ask how are its solutions distributed. To count zeros of a meromorphic function we employ the following auxiliary result.
 \begin{lemma}\label{lem-counting}
 Let $g$ be meromorphic function in $\mathbb{C}$ and suppose that it has no pole or zero on the circle $\mathcal{C}_R = \{z:\:|z| = R\}$. Then the difference between the number of the zeros and poles of $g$ in the disc of radius $R$ of which $\mathcal{C}_R$ is the perimeter is given by
 \[
   \int_{\mathcal{C}_R} \frac{g(z)'}{g(z)} \,\mathrm{d}z\,,
 \]
with prime denoting the derivative with respect to $z$, or equivalently, it is the difference between the number of jumps of the phase of $g(z)$ from $2\pi$ to $0$ along the circle $\mathcal{C}_R$ and the jumps from $0$ to $2\pi$.
 \end{lemma}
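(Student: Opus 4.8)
The plan is to recognise the statement as the classical \emph{argument principle} and to supply the standard short complex-analytic proof, with a little care devoted to the two equivalent formulations.

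First I would observe that $g$ is meromorphic on all of $\mathbb{C}$ and, of course, not identically zero, so that inside the compact disc bounded by $\mathcal{C}_R$ it has only finitely many zeros and poles; write their signed orders as $\nu_1,\dots,\nu_m\in\mathbb{Z}$ (positive at zeros, negative at poles) at points $z_1,\dots,z_m$. Near each such point one has $g(z)=(z-z_\ell)^{\nu_\ell}h_\ell(z)$ with $h_\ell$ holomorphic and nonvanishing, hence the logarithmic derivative $g'/g=\nu_\ell/(z-z_\ell)+h_\ell'/h_\ell$ has a simple pole at $z_\ell$ with residue exactly $\nu_\ell$; away from these points $g'/g$ is holomorphic, and by hypothesis it has no singularity on $\mathcal{C}_R$ itself. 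The residue theorem applied to $g'/g$ on the disc then gives
\[
\frac{1}{2\pi i}\int_{\mathcal{C}_R}\frac{g'(z)}{g(z)}\,\mathrm{d}z=\sum_{\ell=1}^m\nu_\ell\,,
\]
which is the number of zeros minus the number of poles counted with multiplicity, i.e. the first assertion (up to the customary normalising factor $1/2\pi i$).

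For the equivalent phase-jump version I would use $g'/g=(\log g)'$ along the contour. Parametrising $\mathcal{C}_R$ as $z=R\mathrm{e}^{i\varphi}$, $\varphi\in[0,2\pi]$, the integral of $\mathrm{d}\log g$ has real part equal to $\log|g|$ taken around a closed loop, which vanishes, and imaginary part equal to the total change of a continuous determination of $\arg g(R\mathrm{e}^{i\varphi})$; thus the integral equals $2\pi i$ times the winding number of the closed curve $\varphi\mapsto g(R\mathrm{e}^{i\varphi})$ about the origin. Taking instead the principal representative of the phase in $[0,2\pi)$, this winding number is precisely the number of jumps of that representative from $2\pi$ to $0$ minus the number of jumps from $0$ to $2\pi$ as $\varphi$ traverses the circle, which is the form stated in the lemma; combined with the residue computation this finishes the argument.

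I do not expect a genuine obstacle here, the result being textbook, so I would keep the presentation compact. The only points worth stating explicitly are that $g\not\equiv 0$, so the zero/pole set inside the disc is finite; that the hypothesis of no zero or pole on $\mathcal{C}_R$ is exactly what makes $g'/g$ integrable there and makes $\arg g(R\mathrm{e}^{i\varphi})$ a piecewise continuous function of $\varphi$ with only finitely many jumps, each by $\pm2\pi$; and the harmless convention regarding the factor $1/2\pi i$. With the lemma in hand I would then pass to counting the solutions of the resonance condition (\ref{eq-rescon1}).
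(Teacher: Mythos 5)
Your proposal is correct and follows essentially the same route as the paper: factor out the zero/pole locally, observe that the logarithmic derivative $g'/g$ acquires a simple pole with residue equal to the signed order, apply the residue theorem, and read off the phase-jump formulation from $g'/g=(\log g)'$. Your remark about the customary $1/2\pi i$ normalisation (absent from the displayed integral in the statement) is a fair observation, but otherwise there is nothing to add.
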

 \begin{proof}
Suppose that $g(\cdot)$ has at $z_0$ a zero of multiplicity $s$, i.e. $g(z) = h(z) (z-z_0)^s$ with neither $h(z)$ nor $h(z)'$ having a zero or a pole at $z_0$. Using
 \[
   \frac{g(z)'}{g(z)} =  \frac{h(z)'}{h(z)} + \frac{s}{z-z_0}
 \]
we find
 \begin{multline*}
   \mathrm{Res}_{z_0} \frac{g(z)'}{g(z)}\\
   = \frac{1}{(s-1)!}\lim_{z \to z_0}\frac{\mathrm{d}^{s-1}}{\mathrm{d}z^{s-1}}\left((z-z_0)^s \frac{g(z)'}{g(z)}\right) = s\,;
 \end{multline*}
a similar result differing only by the sign is obtained in the case of a pole of multi\-plicity~$s$. Consequently, the function $g(\cdot)'/g(\cdot)$ has pole with the residue $s$ if $g(\cdot)$ has zero of multiplicity $s$, and it has pole with the residue $-s$ at points where $g(\cdot)$ has pole of multiplicity~$s$. Furthermore, $g(\cdot)'$ does not have a pole at $z_0$ as long as $g(\cdot)$ does not which can be easily seen from the appropriate Laurent series. Using now the residue theorem we arrive at the desired integral expression. The claim about the number of phase jumps follows from the fact that $g'(z)/g(z) = (\mathrm{Ln}\,g(z))'$.
 \end{proof}

\subsection{Manifolds with the leads attached at a single point}

We shall consider the situation when $\Gamma$ has a single junction, i.e. there is a point $x_0\in\Omega$ at which all the $M$ halfline leads are attached. Then the matrix function $Q_0(k)$ is reduced to dimension one and it coincides with the regularized Green's function $F_1(x_0,x_0;k)$; for simplicity we drop in the rest of this subsection $x_0$ from the argument. The resonance condition (\ref{eq-rescon1}) then becomes $(\tilde u(k)-1) F_1(k) + i (\tilde u(k)+1) = 0$; we use the lower-case symbol to stress that $\tilde U(k)$ is just a number in this case.

The aim is now to establish the high-energy asymptotics. If we exclude the case of $\tilde u(k)=1$ when the leads are obviously decoupled from the manifold and the motion on $\Omega$ is described by the Hamiltonian $H_0$, we can without loss of generality rewrite the resonance condition as
 \[
   F(k) := F_1(k)+ i\, \frac{\tilde u(k) + 1}{\tilde u(k)- 1} = 0\,.
 \]
From (\ref{eq-coupling2}) we see that $\tilde u(\cdot) - I$ is a rational function. Consequently, it may add zeros or poles to those of $F_1(\cdot)$, however,
their number is finite and bounded by $M$ and $n=1$, respectively. The main thing is thus to find the behaviour of $F_1(k)$, in particular, its asymptotics for $k$ far enough from the real axis.

 \begin{lemma}\label{lem-asym}
The asymptotics of regularized Green's function is the following:
\begin{enumerate}
\item For $d=2$ we have  $F_1(k) = \frac{1}{2 \pi} \big(\ln(\pm ik) - \ln{2}-\gamma_\mathrm{E})+\mathcal{O}(|\mathrm{Im\,}k|^{-1} \big)$  if $\mp\mathrm{Im\,}k>0$,
\item For $d=3$ we have  $F_1(k) = \pm \frac{ik}{4\pi} +\mathcal{O}(|\mathrm{Im\,}k|^{-1} \big)$  if $\mp\mathrm{Im\,}k>0$,
\end{enumerate}
where $\gamma_\mathrm{E}$ stands for the Euler constant.
 \end{lemma}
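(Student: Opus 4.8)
The point is that, although $F_1(\cdot,k)$ is a global quantity --- it is read off the Green's function of $H_0$ on all of $\Omega$ --- for $|\mathrm{Im}\,k|$ large the resolvent kernel is essentially local, so $F_1(k)$ should coincide, up to a small error, with the analogous quantity for the \emph{free} Laplacian on $\mathbb{R}^N$. I would therefore introduce geodesic normal coordinates on a small ball $B_\rho(x_0)$, choose $\chi\in C_0^\infty(B_\rho)$ with $\chi=1$ near $x_0$, and compare $G(\cdot,x_0;k)$ with $\chi(\cdot)\,\mathcal{G}_N(r(\cdot,x_0);k)$, where $\mathcal{G}_N$ is the outgoing free Green's function, $\mathcal{G}_2(r;k)=\frac{i}{4}H_0^{(1)}(kr)$ and $\mathcal{G}_3(r;k)=\frac{\mathrm{e}^{ikr}}{4\pi r}$. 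Since $\chi\mathcal{G}_N$ carries the full local singularity $F_0$, the difference $w(\cdot;k):=G(\cdot,x_0;k)-\chi(\cdot)\mathcal{G}_N(r;k)$ solves an inhomogeneous equation with right-hand side in $L^2(\Omega)$, hence lies in $W^{2,2}(\Omega)=D(H_0)$ and in particular is continuous; expanding both sides of $G=\chi\mathcal{G}_N+w$ near $x_0$ then gives $F_1(k)=F_1^{\mathrm{free}}(k)+w(x_0;k)$. The statement splits into evaluating $F_1^{\mathrm{free}}$ and showing $w(x_0;k)=\mathcal{O}(|\mathrm{Im}\,k|^{-1})$.

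The first part is a direct expansion. For $N=3$ one has $\frac{\mathrm{e}^{ikr}}{4\pi r}=\frac{1}{4\pi r}+\frac{ik}{4\pi}+\mathcal{O}(r)$, hence $F_1^{\mathrm{free}}(k)=\frac{ik}{4\pi}$; for $N=2$ the small-argument behaviour $H_0^{(1)}(z)=1+\frac{2i}{\pi}\big(\ln\frac z2+\gamma_\mathrm{E}\big)+\mathcal{O}(z^2\ln z)$ gives $\frac{i}{4}H_0^{(1)}(kr)=-\frac{1}{2\pi}\ln r+\big(\frac{i}{4}-\frac{1}{2\pi}(\ln\frac k2+\gamma_\mathrm{E})\big)+\mathcal{O}(r^2\ln r)$, the bracket being $F_1^{\mathrm{free}}(k)$. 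In either case the branch of the logarithm (equivalently, of the square-root-type factors) must be fixed so that $\mathcal{G}_N(r;\cdot)$ is holomorphic in the half-plane in which $k$ is taken; placing the cut on the positive, respectively negative, imaginary axis according to the sign of $\mathrm{Im}\,k$ is exactly what rewrites these constants in the announced form $\frac{1}{2\pi}(\ln(\pm ik)-\ln 2-\gamma_\mathrm{E})$, respectively $\pm\frac{ik}{4\pi}$, and it pins down the signs once the branch is matched with the analytic continuation used in Section~\ref{sec-2}.

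For the remainder, $w(\cdot;k)$ solves $(H_0-k^2)w=\eta_k$ with $\eta_k=-[H_0,\chi]\mathcal{G}_N-\chi\,b(r)\,\partial_r\mathcal{G}_N$, where $[H_0,\chi]$ is supported in the annulus $\{\frac\rho2\le r\le\rho\}$ and $b(r)=\frac{N-1}{r}-\frac{\partial_r\sqrt{g}}{\sqrt g}=\mathcal{O}(r)$ measures the deviation of the Laplace-Beltrami operator from the flat Laplacian near $x_0$. Since $\mathcal{G}_N(r;k)$ and its radial derivative decay like $\mathrm{e}^{-|\mathrm{Im}\,k|r}$, the commutator part of $\eta_k$ is exponentially small in $|\mathrm{Im}\,k|$, while the $b$-part is concentrated within distance $\mathcal{O}(|\mathrm{Im}\,k|^{-1})$ of $x_0$. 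Writing $w(x_0;k)=\int_\Omega G(x_0,y;k)\,\eta_k(y)\,\sqrt{|g|}\,\mathrm{d}y$ and inserting $G(x_0,y;k)=\mathcal{G}_N(r;k)+w(x_0;k)+o(1)$, valid for $y$ near $x_0$, one obtains a self-consistent inequality $|w(x_0;k)|\le C\,|\mathrm{Im}\,k|^{-1}+|w(x_0;k)|\cdot\mathcal{O}(|\mathrm{Im}\,k|^{-2})$, using the resolvent bound $\|(H_0-k^2)^{-1}\|=\mathrm{dist}(k^2,\sigma(H_0))^{-1}=\mathcal{O}(|\mathrm{Im}\,k|^{-2})$ (recall $\sigma(H_0)$ is a discrete subset of $\mathbb{R}$ bounded below) and the Sobolev embedding $W^{2,2}\hookrightarrow C^0$ valid in dimensions two and three to control $w$ pointwise near $x_0$; absorbing the last term yields $w(x_0;k)=\mathcal{O}(|\mathrm{Im}\,k|^{-1})$. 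If $\partial\Omega\ne\emptyset$ it lies a positive distance from $x_0$ and contributes only an exponentially small correction, so the boundary condition plays no role. The delicate point is the rate: a crude estimate $\|w\|_{L^2}\le C\|\eta_k\|_{L^2}\,|\mathrm{Im}\,k|^{-2}$ followed by elliptic regularity only yields $\mathcal{O}(|\mathrm{Im}\,k|^{-1/2})$, so one has to use the pointwise representation together with the concentration of $\eta_k$ near $x_0$ --- this is where the vanishing $b(r)=\mathcal{O}(r)$ is essential --- and close the argument self-consistently; a secondary, purely bookkeeping, matter is keeping the branch choices in the second step consistent with the continuation conventions, which is what fixes the two signs.
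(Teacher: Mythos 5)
Your route is genuinely different from the paper's. The paper disposes of this lemma in three lines by citing Avramidi's covariant high-mass expansion of the Green's function of $-\Delta+m^2$ (equations (33) and (36) of [Av2] together with the coefficients from [Av1]), only adjusting the normalization constant to match the convention for the singular part. You instead build a parametrix: compare $G(\cdot,x_0;k)$ with a cut-off free Green's function $\chi\,\mathcal{G}_N$ in geodesic normal coordinates, read $F_1^{\mathrm{free}}$ off the explicit small-argument expansions of $H_0^{(1)}$ and $\mathrm{e}^{ikr}/4\pi r$, and estimate the correction $w=(H_0-k^2)^{-1}\eta_k$ using the concentration of $\eta_k$ near $x_0$ and the exponential decay $\mathrm{e}^{-|\mathrm{Im}\,k|r}$. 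This is more elementary and self-contained, and it makes transparent \emph{why} the asymptotics is purely local (which is exactly the mechanism the paper needs); what the citation buys instead is a full asymptotic expansion with geometric coefficients and no analysis to redo. Your identification of the flat part of $F_0$ with the free singularity and the observation that the deviation of the radial Laplacian is $b(r)\partial_r$ with $b(r)=\mathcal{O}(r)$ are the right ingredients, and your explicit constants for $N=2,3$ are the standard correct ones.

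Three points in your sketch are asserted rather than proved and deserve attention. First, the bootstrap: the inequality you write involves the single number $w(x_0;k)$ on both sides, but the term $\int w(y;k)\eta_k(y)\,\mathrm{d}y$ requires control of $\sup_{B_\rho}|w|$, so the self-consistent estimate must be run for $\|w\|_{L^\infty(B_{\rho/2})}$ (reinserting the decomposition $G=\chi\mathcal{G}_N+w$ for all $x$ near $x_0$, not just $x=x_0$); as it stands the inequality does not close. Second, uniformity: the contribution of $\chi\,b\,\partial_r\mathcal{G}_3$ produces a term of order $|k|\,|\mathrm{Im}\,k|^{-2}$, so your bound is $\mathcal{O}(|\mathrm{Im}\,k|^{-1})$ only in sectors $|k|\lesssim|\mathrm{Im}\,k|$; this suffices for the use made of the lemma in Theorems~4.5--4.6 but is weaker than the literal statement, and you should say in which regime your $\mathcal{O}$ is meant. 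Third, you defer the branch/sign matching, but it cannot be deferred: carrying out your own computation, the decaying branch for $\mathrm{Im}\,k>0$ gives $F_1^{\mathrm{free}}=+\frac{ik}{4\pi}$ (consistent with $(H_0-k^2)^{-1}$ having kernel $\mathrm{e}^{-\kappa r}/4\pi r$ at $k=i\kappa$, and with the paper's own $d=1$ remark $F_1=-\frac{1}{2ik}=\frac{1}{2\kappa}$), which pairs the signs with the half-planes in the \emph{opposite} way to the statement you are proving. Either your normalization differs from the paper's convention for $F_1$ or there is a sign to be reconciled with the printed lemma; in any case the final step of your proof must confront this explicitly rather than appeal to ``matching the continuation conventions.''
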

 \begin{proof}
The claim can be easily verified by reformulating results of Avramidi \cite{Av1, Av2} on high-mass asymptotics of the operator $-\Delta +m^2$ as $m \to \infty$. More precisely, it follows from the stated asymptotics of equations (33) and (36) in \cite{Av2} in combination with the expression for $b_q$ given in \cite{Av1}. The constant $a_0$ in \cite{Av2} can be determined from the form of the singular parts of Green's function to fit with our convention.
 \end{proof}

 \begin{remark} 
{\rm In the case of a graph with a compact core corresponding to $d=1$, which we use for comparison, one has instead $F_1(k) = \pm \frac{1}{2ik} +\mathcal{O}(|\mathrm{Im\,}k|^{-2})$ for $\mp \mathrm{Im\,}k>0$.}
 \end{remark}

Now we can use the previous lemmata to prove the main result of this section.
 \begin{theorem}\label{thm-main2d}
Consider a manifold $\Omega,\: \dim\Omega=2,$ and let $H_U$ be the Hamiltonian on $\Omega$ with several halflines attached at a single point by coupling condition (\ref{eq-coupling}). Then all the resonances of this system are located in the $k$-plane within a finite-width strip parallel to the real axis.
 \end{theorem}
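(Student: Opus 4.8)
The plan is to work with the scalar resonance condition obtained above for a single junction, $F(k):=F_1(k)+r(k)=0$ with $r(k):=i\,\frac{\tilde u(k)+1}{\tilde u(k)-1}$, the decoupled case $\tilde u\equiv 1$ being trivial. By the explicit formula (\ref{eq-coupling2}) the function $\tilde u(\cdot)$, and hence $r(\cdot)$, is rational in $k$; in particular $r$ has only finitely many poles in $\mathbb{C}$ and a well-defined limit in $\mathbb{C}\cup\{\infty\}$ as $|k|\to\infty$. On the other hand $F_1$, being the regularized Green's function of $H_0$, is holomorphic away from the real axis, because its only singularities --- poles at the points with $k^2\in\sigma(H_0)\subset[0,\infty)$ and the logarithmic branch point at $k=0$ visible in Lemma~\ref{lem-asym} --- all lie on the real line. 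The strategy is then to compare the growth of $F_1$ with that of the rational function $r$ far from the real axis and to show they cannot cancel; since the finitely many poles of $r$ lie at fixed points, it suffices to rule out zeros of $F$ with $|\mathrm{Im\,}k|$ arbitrarily large.

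The decisive input is Lemma~\ref{lem-asym}(1): for $|\mathrm{Im\,}k|$ large one has $F_1(k)=\frac{1}{2\pi}\bigl(\ln(\pm ik)-\ln 2-\gamma_\mathrm{E}\bigr)+\mathcal{O}(|\mathrm{Im\,}k|^{-1})$ with $\mp\mathrm{Im\,}k>0$. Since $\mathrm{Re\,}\ln(\pm ik)=\ln|k|$, $|\mathrm{Im\,}\ln(\pm ik)|\le\pi$ and $|k|\ge|\mathrm{Im\,}k|$, this yields two estimates that are uniform in $\mathrm{Re\,}k$ as $|\mathrm{Im\,}k|\to\infty$: on the one hand $|F_1(k)|\le\frac{1}{2\pi}\ln|k|+c_1$, so that $F_1$ grows at most logarithmically; on the other hand $\mathrm{Re\,}F_1(k)=\frac{1}{2\pi}\ln|k|-c_2+o(1)\to+\infty$. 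This mere logarithmic growth is precisely the feature special to $\dim\Omega=2$ --- in three dimensions Lemma~\ref{lem-asym}(2) gives linear growth of $F_1$ and the comparison below would no longer force confinement.

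Now suppose the claim fails, i.e. the resonances are not contained in any horizontal strip. Then there is a sequence $(k_m)$ of zeros of $F$ with $|\mathrm{Im\,}k_m|\to\infty$; consequently $|k_m|\to\infty$ and, for $m$ large, $k_m$ avoids the finitely many poles of $r$, so that $F_1(k_m)=-r(k_m)$. One then distinguishes two cases according to the behaviour of the rational function $r$ at infinity. If $r$ stays bounded near infinity, then $\mathrm{Re\,}r(k_m)$ is bounded, contradicting $\mathrm{Re\,}F_1(k_m)=-\mathrm{Re\,}r(k_m)\to+\infty$. If instead $r$ is unbounded at infinity then, being rational, it grows at least linearly, $|r(k_m)|\ge c_3|k_m|$ for $m$ large with some $c_3>0$, which contradicts $|F_1(k_m)|=|r(k_m)|$ together with the logarithmic bound $|F_1(k_m)|\le\frac{1}{2\pi}\ln|k_m|+c_1$. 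In either case we reach a contradiction, hence $\sup\{\,|\mathrm{Im\,}k|:F(k)=0\,\}<\infty$; recalling in addition that the resolvent resonances lie in the open lower half-plane and the embedded eigenvalues on the real axis, all the singularities counted in this section are confined to a strip $\{-C\le\mathrm{Im\,}k\le 0\}$ parallel to the real axis.

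The only substantial ingredient is Lemma~\ref{lem-asym}, i.e. the two-dimensional Green's-function asymptotics with its logarithmic leading term and uniformly decaying remainder away from the real axis; granting that, the rest is the elementary growth comparison above, and in particular the zero-counting Lemma~\ref{lem-counting} is not needed for this qualitative statement. The only points calling for a little care are the dichotomy for $r$ at infinity (bounded versus polynomial growth) and the harmless bookkeeping of the finitely many poles of $r$, at which $F$ has a pole rather than a zero; neither presents a real obstacle.
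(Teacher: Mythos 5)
Your proof is correct and reaches the conclusion by a genuinely different route than the paper. The paper reduces the question to the behaviour of the phase of $F(k)=F_1(k)+i\,\frac{\tilde u(k)+1}{\tilde u(k)-1}$ along arcs of large circles and invokes the zero-counting Lemma~\ref{lem-counting}: it argues that in the region $|\mathrm{Im\,}k|>C$ the dominant term of $F$ is either the leading power $a_nk^n$ of the rational part or $\ln(\mp ik)$, so the phase makes only finitely many jumps there and hence only finitely many resonances lie outside the strip. You instead argue directly by contradiction: a sequence of zeros $k_m$ with $|\mathrm{Im\,}k_m|\to\infty$ would force $F_1(k_m)=-r(k_m)$, which is impossible because $\mathrm{Re\,}F_1(k_m)=\frac{1}{2\pi}\ln|k_m|+\mathcal{O}(1)\to+\infty$ while a rational $r$ is either bounded near infinity or grows at least linearly, contradicting respectively the divergence of $\mathrm{Re\,}F_1$ or the logarithmic bound on $|F_1|$. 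Both arguments rest on the same essential input, Lemma~\ref{lem-asym}(1); your dichotomy for $r$ at infinity is clean (degree of numerator versus denominator), and your version has the advantage of being elementary and of avoiding the slightly delicate application of the argument principle to a partial arc that the paper glosses over, while the paper's phase-counting formulation is the one that feeds into the quantitative Weyl-type counting questions raised elsewhere in the section. One cosmetic remark: your parenthetical claim that $\sigma(H_0)\subset[0,\infty)$ need not hold for Robin conditions with negative $\gamma$, so $F_1$ may have finitely many poles on the imaginary axis; this is harmless, since your argument only uses the asymptotics of $F_1$ for $|\mathrm{Im\,}k|$ large and those finitely many poles sit at fixed points.
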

 \begin{proof}
From equation (\ref{eq-coupling2}) it follows that $\tilde u(k)$ and subsequently also the expression \mbox{$i (\tilde u(k) -1)^{-1}(\tilde u(k)+1)$} is a rational function of the momentum variable $k$. Hence there exists such a constant $C$ that for $|\mathrm{Im\,}k|>C$ the leading term of $F(k)$ behaves either like a multiple of $\ln{|\mathrm{Im\,}k|}$ or like the leading term of previously mentioned rational function. The constant $C$ can be chosen such that the contribution of the rest of $F$ does change substantially the phase of $F$. More precisely, we then have $|k^n|\ge C^n$ and
 \[
   |\ln{(\mp ik)}| =  \left|\ln|k| \mp \frac{\pi}{2} + \arg k \right| \ge \frac12 \ln C
 \]
for large enough $C$, and consequently, the dominant phase behaviour of
 \[
 a_n k^n \left( 1+ \frac{\ln(\mp ik) + \sum_{j=0}^{n-1} a_j k^j + \mathcal{O}(|k|)}{a_n k^n} \right)
 \]
and
 \[
 \ln(\mp ik) \left( 1+ \frac{c + \mathcal{O}(|k|)}{\ln(\mp ik)}\right)
 \]
is determined by the terms in front of the brackets, in particular, there are finitely many jumps of the phase of $F$ between zero and $2 \pi$ along the part of the circle $|k|=R$ in the region $|\mathrm{Im\,}k|>C$ with $C$ sufficiently large. In other words, all but finitely many resonances can be found within the strip $|\mathrm{Im\,}k|<C$, hence all resonances are located within some strip parallel to the real axis in the momentum plane.
 \end{proof}

 \begin{theorem}\label{thm-main3d}
Let $d = 3$, and let $H_U$ be Hamiltonian on $\Omega$ with several halflines attached at one point by coupling condition (\ref{eq-coupling}). Then all resonances of $H_U$ are located within a strip parallel to the real axis.
 \end{theorem}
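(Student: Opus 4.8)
The plan is to run the proof of Theorem~\ref{thm-main2d} essentially verbatim, using the three-dimensional Green's-function asymptotics of Lemma~\ref{lem-asym}(2) in place of the logarithmic ones; the phase-counting is if anything more transparent here, since the dominant term $\pm\frac{ik}{4\pi}$ is now a genuine polynomial in $k$. Concretely, I would first set aside the decoupled case $\tilde u(k)\equiv 1$ and write the resonance condition as a zero of
\[
  F(k):=F_1(k)+i\,\frac{\tilde u(k)+1}{\tilde u(k)-1}\,,
\]
observing, exactly as in the two-dimensional case, that by (\ref{eq-coupling2}) the function $\rho(k):=i\,(\tilde u(k)+1)(\tilde u(k)-1)^{-1}$ is a rational function of $k$ of degree bounded in terms of $M$, and that $\tilde u$ can be singular, so that $\rho$ acquires extra zeros or poles, only at finitely many points. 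Lemma~\ref{lem-asym}(2) then yields $F(k)=\pm\frac{ik}{4\pi}+\rho(k)+\mathcal{O}(|\mathrm{Im}\,k|^{-1})$ in the half-planes $\mp\mathrm{Im}\,k>0$, so that far from the real axis $F$ behaves, up to a uniformly small error, like the rational function $G(k):=\pm\frac{ik}{4\pi}+\rho(k)$.

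In the generic situation $G$ is a rational function whose numerator has strictly larger degree than its denominator, so $|G(k)|\to\infty$ along the part of the circle $\mathcal{C}_R$ that lies in $\{|\mathrm{Im}\,k|>C\}$; for $C$ large the error term $\mathcal{O}(C^{-1})$ then keeps the phase of $F$ within $\tfrac{\pi}{2}$ of that of $G$ on that arc, and the phase of the fixed rational function $G$ makes only a bounded number of jumps between $0$ and $2\pi$ along it, uniformly in $R$. By Lemma~\ref{lem-counting} this forces $F$ to have at most finitely many zeros with $|\mathrm{Im}\,k|>C$; equivalently, all but finitely many resonances lie in the strip $\{|\mathrm{Im}\,k|\le C\}$, whence all of them lie within some strip parallel to the real axis.

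The one genuinely new feature compared with $d=2$, and the step I expect to be the main obstacle, is the exceptional possibility that the rational function $\rho(k)$ cancels the linear term $\pm\frac{ik}{4\pi}$, so that $G$ stays bounded---or even tends to zero---at infinity and $F(k)=\mathcal{O}(|\mathrm{Im}\,k|^{-1})$ with no visible dominant term; such a cancellation is impossible for $d=2$, where a logarithm is never rational. I would deal with it by recalling that the estimate of Lemma~\ref{lem-asym}(2) is the truncation of a full asymptotic series of $F_1$ in inverse powers of $k$ furnished by the heat-kernel coefficients of \cite{Av1,Av2}: at the first order in which this series and the rational contribution fail to cancel one has again $F(k)\sim c\,k^{-j}$ with $c\neq 0$, whose phase winds only boundedly along the arc, so the argument of the previous paragraph goes through unchanged. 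The only remaining alternative, that the entire asymptotic series of $F$ cancels in the relevant half-plane, calls for a separate argument using the explicit structure of the continued resolvent kernel; it does not arise for a genuine compact $\Omega$, since $F_1$ then has infinitely many poles, one per eigenvalue of $H_0$, and in particular is not rational and does not reduce to the free three-dimensional value $\pm\frac{ik}{4\pi}$. Combining the two cases, all resonances of $H_U$ are confined to a finite-width strip parallel to the real axis.
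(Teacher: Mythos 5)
Your reduction of the problem and your treatment of the generic case follow the paper: the resonance condition is $F(k)=F_1(k)+i(\tilde u(k)+1)(\tilde u(k)-1)^{-1}=0$, the coupling contribution is rational, and Lemma~\ref{lem-asym} together with the phase-counting of Lemma~\ref{lem-counting} confines all but finitely many zeros to a strip whenever the rational part does not cancel the leading term $\pm\frac{ik}{4\pi}$. You also correctly identify the one genuinely new feature of $d=3$: the possibility that the rational coupling term exactly matches $\mp\frac{ik}{4\pi}$. It is precisely at this point, however, that your argument has a genuine gap, in two respects.

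First, your fallback via higher-order terms of the Avramidi expansion does not close the case of partial cancellation: the error in Lemma~\ref{lem-asym} is $\mathcal{O}(|\mathrm{Im}\,k|^{-1})$, i.e.\ controlled by the distance to the real axis, not by $|k|$. On the arc of $\mathcal{C}_R$ lying in $\{|\mathrm{Im}\,k|>C\}$ a putative dominant term $c\,k^{-j}$ has modulus $|c|R^{-j}\to 0$ as $R\to\infty$, while the error is only known to be $\mathcal{O}(C^{-1})$; hence the phase of $F$ on that arc is not governed by $c\,k^{-j}$, and the winding count does not follow. Second, and more seriously, dismissing total cancellation on the grounds that $F_1$ has infinitely many poles and is therefore not rational proves nothing: an asymptotic power series in a half-plane does not determine the function, and the paper's own one-dimensional example shows the phenomenon is real --- there $F_1(k)=\frac{1}{2k}\tan kl$ is non-rational with infinitely many poles, yet its full power-type asymptotics in a half-plane is $\pm\frac{1}{2ik}$, a balanced coupling cancels it, and resonances do escape to imaginary infinity. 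The paper closes the exceptional case quite differently, by pure algebra: cancellation forces $\tilde u(k)=-\frac{4\pi\pm k}{4\pi\mp k}$, and a direct analysis of the formula (\ref{eq-coupling2}) for $\tilde u$ in terms of the blocks of a unitary $U$ --- evaluating at $k=\pm 4\pi$ and $k=1$, diagonalizing $U_4$, and exploiting the norm constraints on the rows and columns of $U$ --- shows that no unitary matrix can produce this function. This explicit exclusion is the substantive content of the three-dimensional theorem and is missing from your proposal.
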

 \begin{proof}
In the case when the coupling term does not coincide with the first term of asymptotics of $F_1$, i.e. $(\tilde u(k)-1)^{-1}(\tilde u(k)+1) \ne \pm \frac{k}{4\pi}$, one can employ the same arguments as in previous theorem. Let us check that no unitary matrix can lead to such an effective coupling matrix. Was it the case we would have
 \begin{equation}
      \tilde u(k) = -\,\frac{4 \pi \pm k}{4 \pi \mp k}\,. \label{eq-uu}
 \end{equation}
Assume that (\ref{eq-uu}) holds true for some unitary matrix $U$. For the upper sign the expression diverges at $k=4\pi$; this contradicts the unitarity of $U$ by which its modulus must not exceed one.

Let us now turn to the lower-sign case. Using $U_4 = V^{-1} D V$, $U_{2V} =  U_2 V^{-1}$ and $U_{3V} = V U_3$ with a diagonal $D$ and a unitary $V$, the equation (\ref{eq-uu}) becomes
$$
  -\frac{4\pi -k}{4\pi+k} = u_1 -U_{2V}\left(D-\frac{1+k}{1-k}I\right)^{-1} U_{3V}\,.
$$
Let us look how the right-hand side behaves in the vicinity of $-4\pi$
choosing $k = -4\pi+\varepsilon$. If none of the eigenvalues of $D$ equals $\frac{1-4\pi}{1+4\pi}$ the relation cannot be valid as $\varepsilon\to 0$. Furthermore, combining (\ref{eq-uu}) with the behaviour of the last expression as $k\to 1$, we find $u_1 = \frac{1-4\pi}{1+4\pi}$. Were there two or more eigenvalues of $D$ equal to $\frac{1-4\pi}{1+4\pi}$, we could conclude that the vector $(u_1,U_{3V})^\mathrm{T}$ has norm bigger than one, which contradicts the unitarity of $U$. Consequently, there is exactly one eigenvalue $\frac{1-4\pi}{1+4\pi}$ of the matrix $D$. Since the rows and columns of $U_4$ can be rearranged, we may assume that it is the first eigenvalue in which case we have
\begin{multline*}
  -\frac{8\pi}{\varepsilon} = u_1-1+U_{2V}^{(1)} U_{3V}^{(1)}
\frac{(1+4\pi)(1+4\pi+\varepsilon)}{2 \varepsilon}
  \\
  -U_{2V}'\left(D'-\frac{1-4\pi+\varepsilon}{1+4\pi
  -\varepsilon}I\right)^{-1}U_{3V}'\,,
\end{multline*}
where $U_{2V}^{(1)}$ and $U_{3V}^{(1)}$ are the first entries of $U_{2V}$
and $U_{3V}$, and $U_{2V}'$ and $U_{3V}'$ is rest of the row/column,
respectively. To match the $\varepsilon^{-1}$ terms on both sides of the last equation, the identity
$$
  -8\pi = \frac{(1+4\pi)^2}{2}U_{2V}^{(1)} U_{3V}^{(1)}
$$
has to be valid. From this and the unitarity of $U$ it follows that $|U_{2V}^{(1)}|= |U_{3V}^{(1)}| = \sqrt{1-\left(\frac{1-4\pi} {1+4\pi}\right)^2}$. Using the unitarity again, we note that the column
$(u_1,U_{3V})^\mathrm{T}$ must have norm equal to one, and since we know already that $u_1 = \frac{1-4\pi}{1+4\pi}$, it follows that $U_{2V}' = U_{3V}'= 0$. Equation (\ref{eq-uu}) now becomes
\begin{multline*}
  -\frac{4\pi-k}{4\pi+k} = \frac{1-4\pi}{1+4\pi} \\
  -\left(1-\left(\frac{1-4\pi}{1+4\pi}\right)^2\right)\,\mathrm{e}^{i
\varphi} \left(\frac{1-4\pi}{1+4\pi}-\frac{1+k}{1-k}\right)^{-1}
\end{multline*}
with $\varphi$ being the phase of $U_{2V}^{(1)}U_{3V}^{(1)}$. This clearly cannot be true for all $k$, which can be seen, for instance, from
observing the limit $k\to \infty$; in this way we come to a contradiction.
 \end{proof}

The two previous claims show that, unlike the case of quantum graphs, one cannot find a sequence of resonances which would escape to imaginary infinity in the momentum plane; in the next section we provide a couple of examples illustrating the comparison between the one-dimensional case and the two- and three-dimensional cases. Let us stress, however, that any such sequence tends to infinity along the real axis, and thus the above results do not answer the question stated in the opening, namely whether the resonance asymptotics has always the Weyl character.

Also, we postpone to another paper discussion of the case when halflines are attached at two and more points; we note that the Green's function on the manifold between two distinct points does not depend only on local curvature properties, but also nontrivially on the structure of the whole manifold.

\subsection{Examples}

To illustrate how the resonance asymptotical behaviour depends on the dimension of $\Omega$ we consider now to examples of a planar manifold with Dirichlet boundary conditions in dimensions one and two.

First we illustrate the difference between the dimensions one and two in a pair of examples, at a glance similar. In former case one is able to adjust the parameters to obtain a non-Weyl graph for which one half of the resonances escape to imaginary infinity, hence the number of phase jumps along the circle of increasing radius increases. On the other hand, we do not observe such a behaviour in dimension two.

 \begin{example}{\rm
In the case $\mathrm{dim}\,\Omega=1$ we consider an abscissa of length $2l$ with $M$ halflines attached in the middle -- cf.~Figure \ref{fig2}. We impose Dirichlet boundary conditions at its endpoints, $f(-l) = f(l) = 0$, and the condition (\ref{eq-coupling}) at the middle.
\begin{figure}
   \begin{center}
     \includegraphics{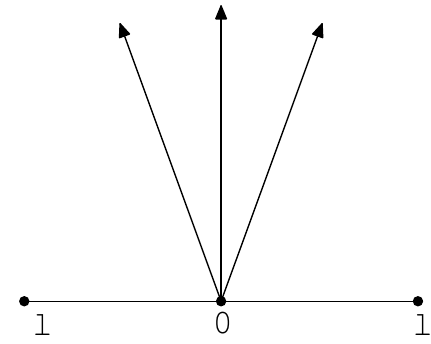}
     \caption{A `thin-hedgehog' manifold, $d = 1$}\label{fig2}
   \end{center}
\end{figure}
The Green function of the operator $H_0$ is given by
 \begin{multline*}
   G(x,y;k) =  F_1(x,y;k) =\sum_n \frac{\bar\psi_n(x)\psi_n(y)}{\lambda_n-k^2} \\
   = \frac{1}{l}\sum_{n=1}^\infty\frac{\cos{\frac{(2n-1)\pi x}{2l}}\cos{\frac{(2n-1)\pi y}{2l}}}{\left(\frac{(2n-1) \pi}{2 l}\right)^2-k^2} \,.
 \end{multline*}
Substituting, in particular, $x = y = 0$ one obtains
 \[
   F_1(0,0;k) = \frac{1}{l}\sum_{n =1}^{\infty}\frac{1}{\left(\frac{(2n-1) \pi}{2 l}\right)^2-k^2} = \frac{1}{2k}\tan{kl}\,.
 \]
Substituting this into the resonance condition one can check easily that resonance count asymptotics has a non-Weyl character if the coupling is chosen as follows,
 \[
    i\frac{\tilde u(k)+1}{\tilde u(k)-1} = \pm\frac{i}{2k} \quad \Rightarrow \quad \tilde u(k) = \frac{-2k\mp 1}{2k\mp 1}\,.
 \]
The upper-sign choice can be realized, e.g. by taking $M =2$ and connecting the halflines with the abscissa by Kirchhoff conditions -- see \cite{DEL, DP}. This corresponds to a `balanced' vertex connecting two internal and two external edges. The phase of the regularized Green's function $F_1$ and the left-hand side of the resonance condition for the above choice of the coupling, $F_1+\frac{i}{2k}$, can be seen on Figures~\ref{con1} and \ref{con2}, respectively; the latter one illustrates how the number of phase jumps increases for an increasing radius of the circle.

\begin{figure}
   \begin{center}
     \includegraphics[width=5.5cm]{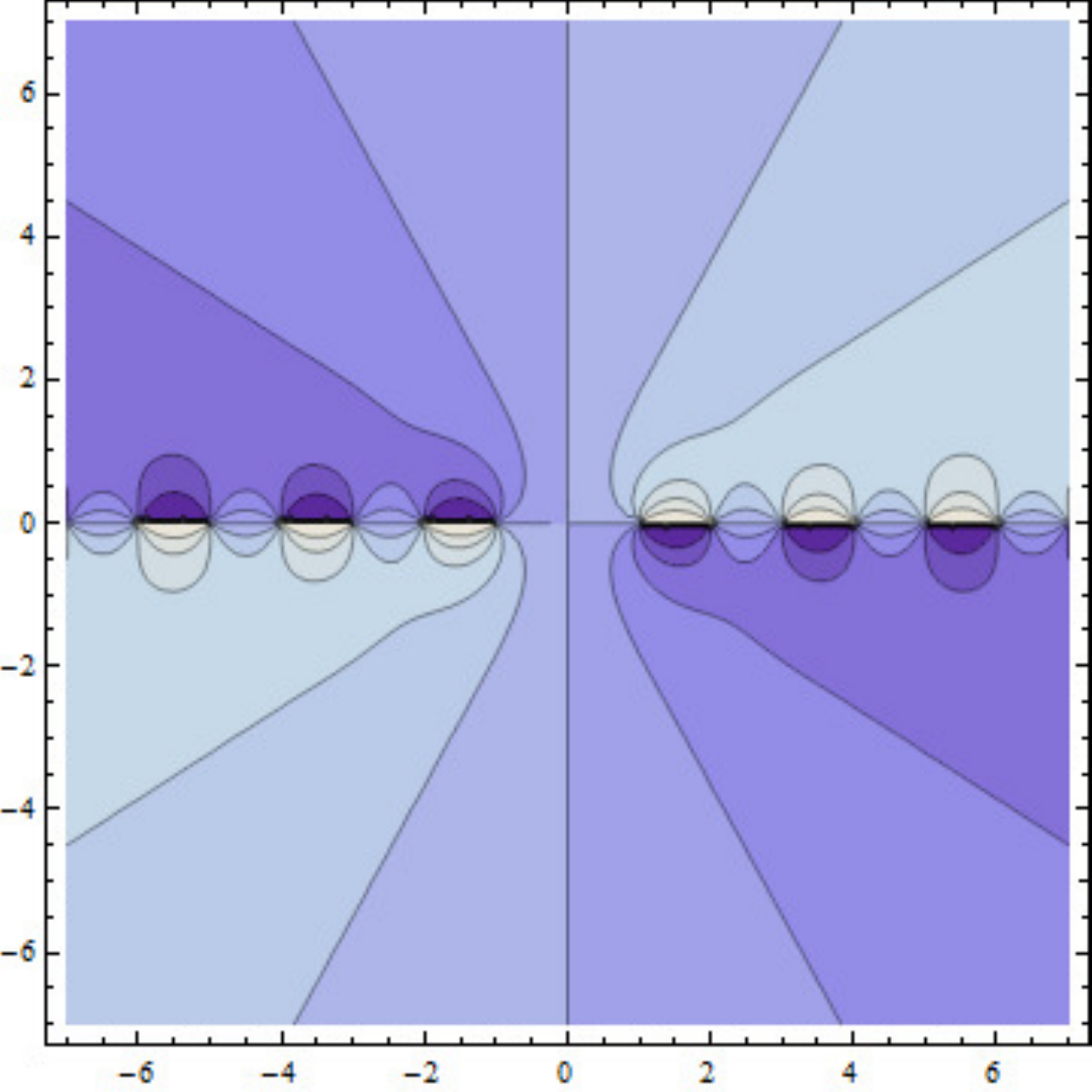}
     \caption{Phase of the Green function for $d=1$}\label{con1}
   \end{center}
\end{figure}
\begin{figure}
   \begin{center}
     \includegraphics[width=5.5cm]{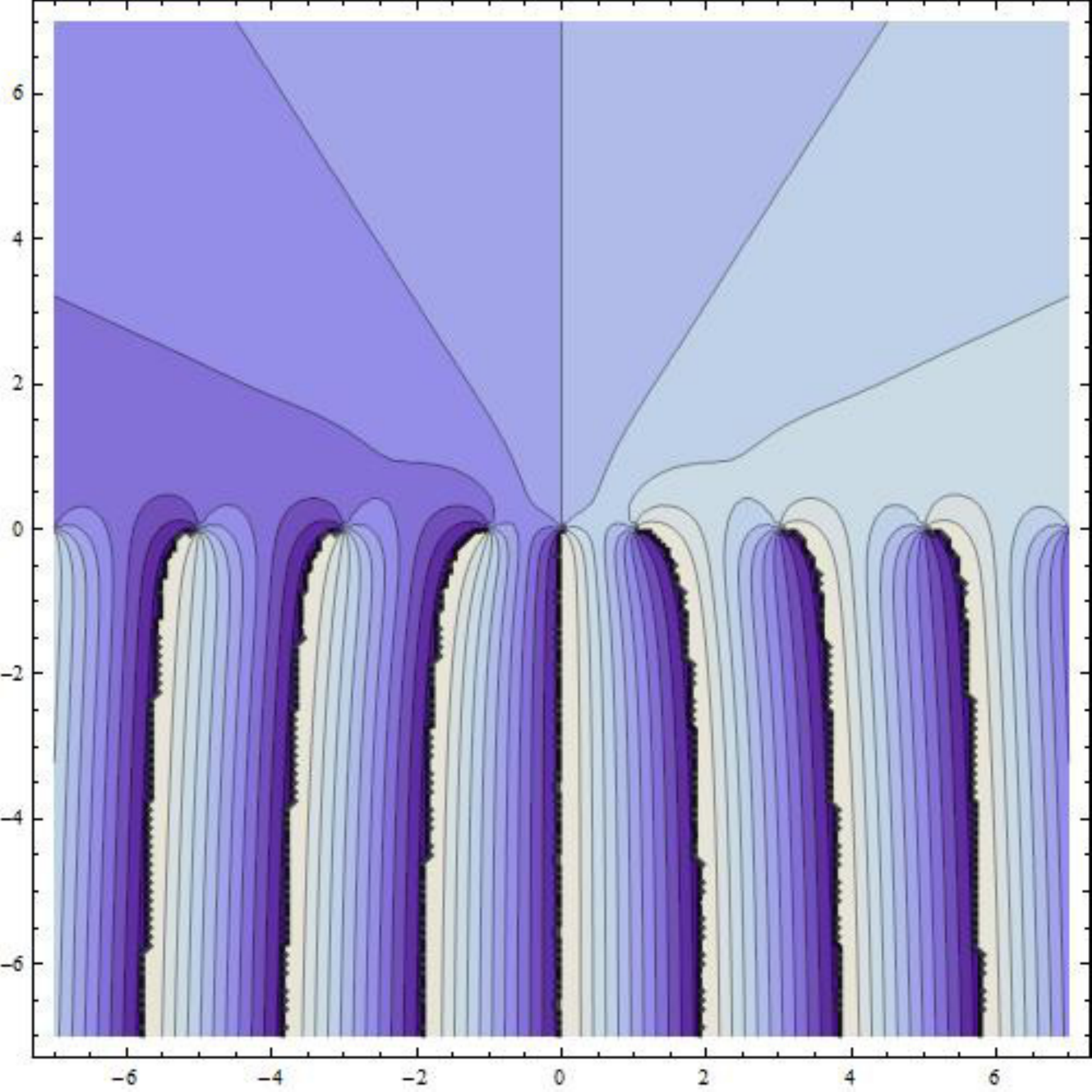}
     \caption{Phase of the Green function plus $\frac{i}{2k}$}\label{con2}
   \end{center}
\end{figure}
 }\end{example}

 \begin{example} \label{drum}
{\rm
Consider next an analogous situation in two dimensions -- a flat circular drum of radius $l$ with Dirichlet boundary condition at $r = R$ and $M$ halflines attached in its center. Because of the rotational symmetry, the
 \begin{figure}[h]
   \begin{center}
     \includegraphics{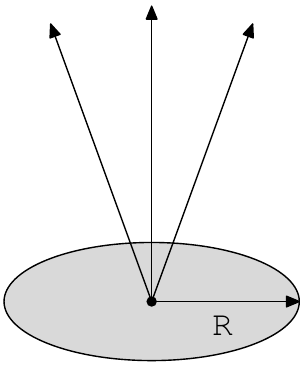}
     \caption{The hedgehog manifold of Example~\ref{drum}}\label{fig3}
   \end{center}
 \end{figure}
Green's function with one argument fixed at $y = 0$ can be expressed as a combination of Bessel functions,
 \[
   G(x,0;k) = -\frac{1}{4} Y_0(kr)+ c(k) J_0 (kr)\,,
 \]
where $r:=|x|$ and $J_0$ and $Y_0$ are Bessel functions of the first and second kind, respectively. The constant by $Y_0$ is chosen so that $G$ satisfies (\ref{eq-asym}). We employ the well-known asymptotic behaviour of Bessel functions,
 \[
   Y_0(x) \sim -\frac{2}{\pi} (\ln{x/2}+\gamma)\,,\quad  J_0(x) \sim 1
 \]
as $x \to 0$, which yields the expression
 \[
   F_1(k) = -\frac{1}{2\pi} (\ln{k}-\ln{2}+\gamma)+\frac{Y_0(kR)}{4 J_0(kR)}\,.
 \]
Using the asymptotics of $J_0(x)$ and $Y_0(x)$ as $x\to \infty$, one finds that the second term at the right-hand side behaves as $\frac{1}{4} \tan\big(kR - \frac{\pi}{4}\big)$ and its absolute value is therefore bounded for $k$ outside the real axis. The phase of $F_1$ for $R = \pi$ is plotted in Figure~\ref{con3}.
 \begin{figure}[h]
   \begin{center}
     \includegraphics[width=5.5cm]{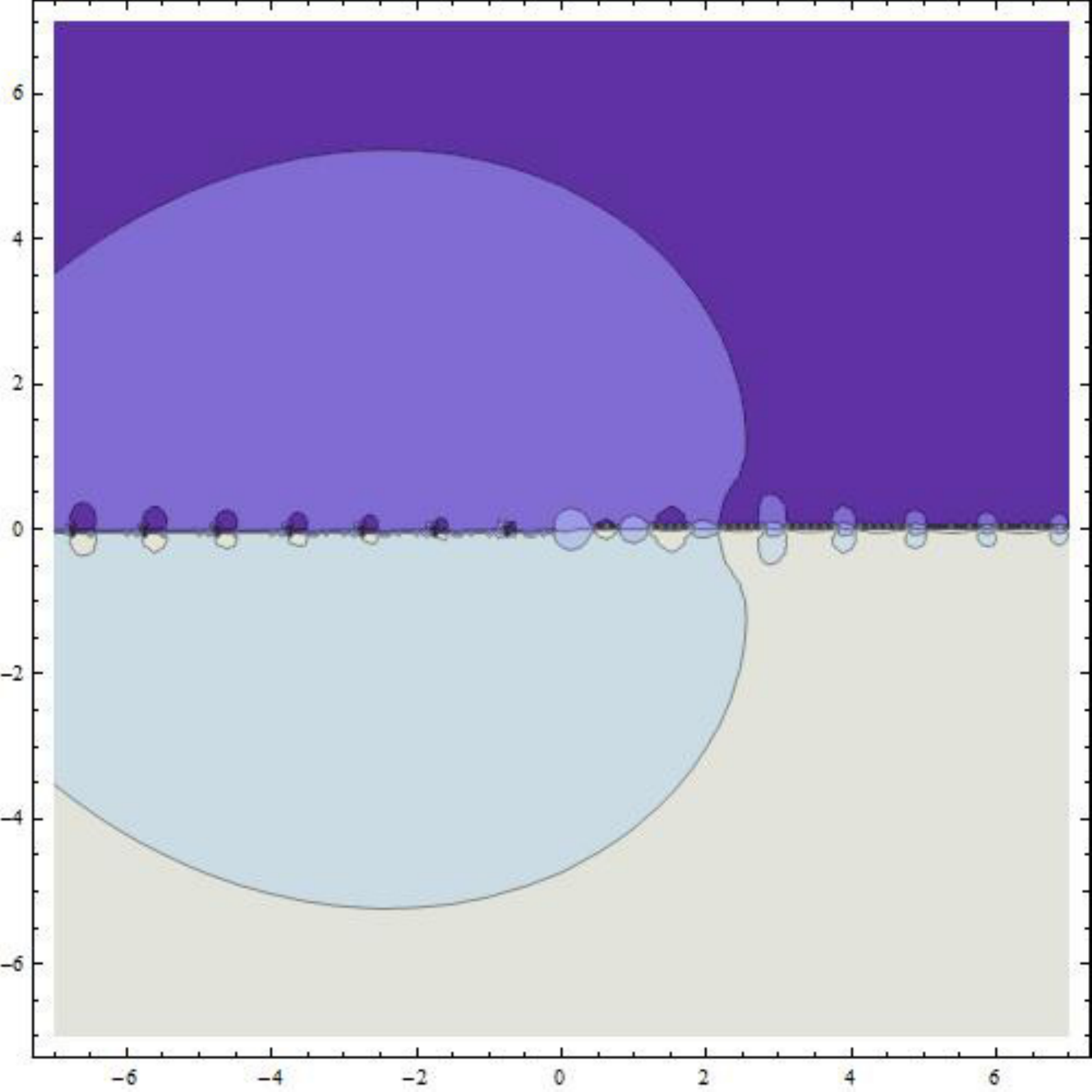}
     \caption{Phase of the regularized Green function for the hedgehog manifold of Example~\ref{drum}}\label{con3}
   \end{center}
 \end{figure}
 }\end{example}

\section{Resonances for a hedgehog manifold in magnetic field}

Now we are going to present an example showing that an appropriately chosen magnetic field can remove all `true resonances' on a hedgehog manifold, i.e. those corresponding to poles in the open lower complex halfplane. We note that this does not influence the semiclassical asymptotics in this case because the embedded eigenvalues of the system corresponding to higher partial waves with eigenfunctions vanishing at the junction will persist being just shifted.

 \begin{figure}[h]
   \begin{center}
     \includegraphics{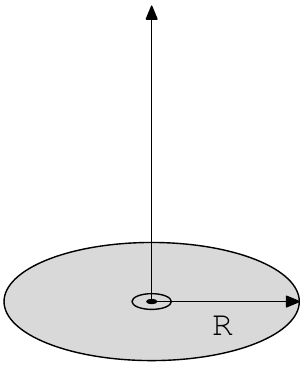}
     \caption{A disc with a lead in a magnetic field}\label{fig4}
   \end{center}
 \end{figure}

The manifold of our example will consists of a disc of radius $R$ with a halfline lead attached at its centre, for definiteness we assume that it is perpendicular to the disc plane, cf.~Figure~\ref{fig4}. The disc is parametrized by polar coordinates $r,\,\varphi$, and Dirichlet boundary conditions are imposed at $r=R$. We suppose that the system is under the influence of a magnetic field in the form of an Aharonov-Bohm string which coincides in the `upper' halfspace with the lead. The effect of an Aharonov-Bohm field piercing a surface was studied in numerous papers --- see, e.g., \cite{AT,DS,ESV} --- so we can just modify those results for our purpose. The idea is that the `true' resonances will disappear if we manage to choose such a coupling in which the radial part of the disc wave function will match the halfline wave function in a trivial way.

We write the Hilbert space of the model as $\mathcal{H} = L^2((0,R),r \mathrm{d}r )\otimes L^2(S^1)\oplus L^2(\mathbb{R}^+)$; the admissible Hamiltonians are then constructed as selfadjoint extensions of the operator $\dot{H_\alpha}$ acting as
 \[
    \dot{H_\alpha} {u \choose f} = \left( \begin{array}{c} -\frac{\partial^2 u}{\partial r^2} - \frac{1}{r} \frac{\partial u}{\partial r} + \frac{1}{r^2}\left(i \frac{\partial}{\partial \varphi}-\alpha\right)^2 u \\
     - f'' \end{array} \right)
 \]
on the domain consisting of functions ${u\choose f}$ with $u\in H^2_\mathrm{loc}(B_R(0))$ satisfying $u(0,\varphi)= u(R,\varphi) =0$ and $f \in H^2_{\mathrm{loc}}(\mathbb{R}^+)$ satisfying $f(0)=f'(0)=0$. The parameter $\alpha$ in the above expression is the magnetic flux of the Aharonov-Bohm string in the units of the flux quantum; since an integer value of the flux plays no role in view of the natural gauge invariance we may restrict our attention to the values $\alpha \in (0,1)$.

Using the partial-wave decomposition together with the standard unitary transformation $(V u)(r) = r^{1/2} u(r)$ to the reduced radial functions we get
 \[
  \dot{H_\alpha} = \bigoplus_{m = -\infty}^{\infty} V^{-1} \dot{h}_{\alpha,m} V \otimes I
 \]
where the component $\dot{h}_{\alpha,m}$ act on the upper component of $\psi={\phi\choose f}$ as
\begin{equation}
  \dot{h}_{\alpha,m} \phi = -\frac{\mathrm{d}^2 \phi}{\mathrm{d}r^2} +\frac{(m+\alpha)^2 -1/4}{r^2}\phi \,.\label{ham}
\end{equation}
To construct the self-adjoint extensions of $\dot{H_\alpha}$ which describe the coupling between the disc and the lead the following functionals can be used,
\begin{eqnarray*}
  \Phi_1^{-1} (\psi) &=&\sqrt{\pi} \lim_{r\to 0} \frac{r^{1-\alpha}}{2\pi} \int_0^{2\pi}{u(r,\varphi) \mathrm{e}^{i\varphi}}\mathrm{d}\varphi\,, \\
  \Phi_2^{-1} (\psi) &=&\sqrt{\pi} \lim_{r\to 0} \frac{r^{-1+\alpha}}{2\pi}\left[ \int_0^{2\pi}{u(r,\varphi) \mathrm{e}^{i\varphi}}\mathrm{d}\varphi \right.\\
  & & \hspace{3mm} -2\sqrt{\pi} r^{-1+\alpha}\Phi_{-1}^1(\psi)\left.\right]\,, \\
  \Phi_1^{0} (\psi) &=& \sqrt{\pi}\lim_{r\to 0} \frac{r^{\alpha}}{2\pi} \int_0^{2\pi}{u(r,\varphi)}\mathrm{d}\varphi\,, \\
  \Phi_2^{0} (\psi) &=& \sqrt{\pi}\lim_{r\to 0} \frac{r^{-\alpha}}{2\pi}\left[ \int_0^{2\pi}{u(r,\varphi)}\mathrm{d}\varphi\right.\\
  && \hspace{3mm} -2\sqrt{\pi} r^{-\alpha}\Phi_1^0(\psi)  \left.\right]\,, \\
  \Phi_1^{\mathrm{h}}(\psi) &=& f(0)\,,\quad  \Phi_2^{\mathrm{h}} (\psi) = f'(0)\,.
\end{eqnarray*}
The first two of them are, in analogy with \cite{DS}, multiples of the coefficients of the two leading terms of asymptotics as $r\to 0$ of the wave functions from $\dot{H_\alpha}\!\!^*$ belonging to the subspace with $m = -1$, the second two correspond to the analogous quantities in the subspace with $m=0$, the last two are the standard boundary values for the Laplacian on a halfline.

It is obvious that if the s-wave resonances should be absent, one has to get rid of the second term in the expression (\ref{ham}) for the $m=0$ function, hence we will restrict our attention to the case $\alpha  = 1/2$. In analogy with the case of an Aharonov-Bohm flux piercing a plane treated in \cite{DS} one obtains
\begin{multline*}
  \hspace{-3mm}(\psi_1, H \psi_2) = - \int_0^{2\pi} \int_0^R  \overline{u_1}\, r^{-1/2} \frac{\mathrm{d}^2}{\mathrm{d}r^2} r^{1/2} u_2 \, r\, \mathrm{d}r\, \mathrm{d}\varphi \\
  - \int_0^\infty \overline{f_1} {f_2}'' \,\mathrm{d}x =  - \int_0^{2\pi} \int_0^R \overline{\tilde u_1} \tilde{u_2}'' \, \mathrm{d}r\, \mathrm{d}\varphi \\
  - \int_0^\infty \overline{f_1} {f_2}'' \,\mathrm{d}x =  - \int_0^{2\pi} \overline{\tilde{u_1}} \tilde{u_2}' \, \mathrm{d}\varphi \\
  +\int_0^{2\pi} \int_0^R \overline{\tilde u_1}' \tilde{u_2}'  \, \mathrm{d}r\, \mathrm{d}\varphi - \overline{f_1}(0+) f'_2(0+) \\
  +\int_0^\infty \overline{f_1}' {f_2}' \,\mathrm{d}x
 \,,
\end{multline*}
where $\tilde u_a = r^{1/2} u_a$, $a = 1,2$, is a multiple of the disc component of $u_a$ (with the prime denoting the derivative with respect to $r$) and $f_a$ is the corresponding halfline component. Hence we have
\begin{multline*}
 \hspace{-4mm}(\psi_1, H \psi_2) - (H \psi_1, \psi_2) =
 \lim_{r\to 0} \int_0^{2\pi} \left[ \overline{\tilde{u_1}} \tilde{u_2}' - \overline{\tilde{u_2}} \tilde{u_1}' \right] \, \mathrm{d}\varphi \\
 + \overline{f_2}(0+) f'_1(0+) - \overline{f_1}(0+) f'_2(0+)\,,
\end{multline*}
and using asymptotic expansion of $u$ near $r = 0$,
\begin{eqnarray*}
  \sqrt{\pi} u (r,\theta) = &(\Phi_1^{-1}(\psi) r^{-1/2} +\Phi_2^{-1}(\psi) r^{1/2}) \mathrm{e}^{-i\theta} \\
  &+\Phi_1^0(\psi) r^{-1/2} + \Phi_2^0(\psi) r^{1/2}\,,\\
 - 2 r \sqrt{\pi} u' (r,\theta)= &(\Phi_1^{-1}(\psi) r^{-1/2} - \Phi_2^{-1}(\psi) r^{1/2}) \mathrm{e}^{-i\theta} \\
 & + \Phi_1^0(\psi) r^{-1/2} - \Phi_2^0(\psi) r^{1/2}\,,
\end{eqnarray*}
one finds
\begin{multline*}
  (\psi_1, H \psi_2) - (H \psi_1, \psi_2) =\\
  = \Phi_1(\psi_1)^* \Phi_2(\psi_2) - \Phi_1(\psi_2)^* \Phi_2(\psi_1) \,,
\end{multline*}
where $\Phi_a(\psi) = (\Phi_a^{\mathrm{h}}, \Phi_a^{0}, \Phi_a^{-1})^{\mathrm{T}}$ for $a = 1,2$. Consequently, to get a self-adjoint Hamiltonian one has to impose coupling conditions similar to (\ref{eq-coupling}), namely
 \[
   (U - I) \Phi_1(\psi) + i (U + I) \Phi_2(\psi) = 0
 \]
with a unitary $U$. We choose the latter in the form
\begin{equation} \label{ABfree}
  U = \left(\begin{array}{ccc}
  0 & 1 & 0 \\
  1 & 0 & 0 \\
  0& 0& \mathrm{e}^{i\rho}
  \end{array}\right)\,,
\end{equation}
i.e. the nonradial part ($m=-1$) of the disc wave function is coupled to neither of the other two, while the radial part ($m=0$) is coupled to the halfline via Kirchhoff's (free) coupling. To see that this choice kills all the `true' resonances, we choose the Ansatz
\begin{eqnarray*}
  f(x) = a\, \sin{k x} + b\, \cos{k x}\,,\\
  u(r) = r^{-1/2}(c\, \sin{k(R-r)})
\end{eqnarray*}
which yields the boundary values
\begin{eqnarray*}
  \Phi_1(\psi) = (b, c\sqrt{\pi} \sin{kR},0)^\mathrm{T}\,,\\
  \Phi_2(\psi) =  k (a, - c \sqrt{\pi}\cos{kR},0)^\mathrm{T}\,.
\end{eqnarray*}
It follows now from the coupling conditions that
 \[
   b = c\sqrt{\pi} \sin{kR}\,, \quad a  = c\sqrt{\pi} \cos{kR}\,,
 \]
hence
 \[
  f(x) = c\sqrt{\pi} \sin{k (R+x)}\,,
 \]
thus for any $k \not \in \mathbb{R}$ and $c \ne 0$ the function $f$ contains necessarily a nontrivial part of the wave $\mathrm{e}^{-ikx}$, however, as we have argued above, a resolvent resonance can must have the asymptotics $\mathrm{e}^{ikx}$ only. In this way we come to the indicated conclusion:

\begin{proposition}
The described system has no true resonances for the coupling corresponding to the matrix (\ref{ABfree}) and the magnetic flux $\alpha=\frac12$.
\end{proposition}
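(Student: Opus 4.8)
The plan is to prove the Proposition by exhibiting an explicit solution of the (complex-scaled) eigenvalue problem and checking directly that, for the coupling matrix~(\ref{ABfree}) and $\alpha=\tfrac12$, it can never satisfy the radiation condition characterizing a true resonance. By the theorem of Section~\ref{sec-2} the scattering and resolvent notions agree, so I may phrase the criterion as follows: a true resonance at $k_0$ with $\mathrm{Im\,}k_0<0$ requires a nontrivial eigenfunction of $\mathbf H_\theta$ whose lead component, before scaling, is purely outgoing, i.e. proportional to $\mathrm{e}^{ik_0x}$ with no admixture of the incoming wave $\mathrm{e}^{-ik_0x}$ (this is exactly the condition $\mathbf a=0$ used in the proof of that theorem).

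First I would reduce the problem to a single partial wave. Since $U$ in~(\ref{ABfree}) is block diagonal, the $m=-1$ channel is decoupled from the lead: there the condition becomes a standard (self-adjoint) point-interaction/Robin condition on the reduced radial operator on $(0,R)$, whose resolvent is meromorphic with real poles only. The channels with $|m+\alpha|>\tfrac12$ are limit-point at $r=0$, hence essentially self-adjoint and insensitive to the junction, again producing only real (embedded) eigenvalues. Therefore only the $m=0$ channel can host a pole off the real axis, and with $\alpha=\tfrac12$ the centrifugal term in~(\ref{ham}) vanishes, so the reduced radial operator there is simply $-\mathrm{d}^2/\mathrm{d}r^2$ on $(0,R)$ with Dirichlet at $r=R$, coupled to $-f''$ on the halfline through the Kirchhoff block in the top-left corner of $U$.

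Next I would insert the Ansatz $f(x)=a\sin kx+b\cos kx$ on the lead and $u(r)=r^{-1/2}c\sin k(R-r)$ on the disc; the latter already satisfies the Dirichlet condition at $r=R$. Reading off the two leading coefficients of $\sqrt\pi\,u(r)$ as $r\to0$ gives the boundary values $\Phi_1(\psi)=(b,c\sqrt{\pi}\sin kR,0)^{\mathrm T}$ and $\Phi_2(\psi)=k(a,-c\sqrt{\pi}\cos kR,0)^{\mathrm T}$. Imposing $(U-I)\Phi_1(\psi)+i(U+I)\Phi_2(\psi)=0$ with the matrix~(\ref{ABfree}) and adding, respectively subtracting, the two nontrivial rows yields $a=c\sqrt{\pi}\cos kR$ and $b=c\sqrt{\pi}\sin kR$, so that $f(x)=c\sqrt{\pi}\sin k(R+x)$. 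Writing this as $\tfrac{c\sqrt\pi}{2i}\big(\mathrm{e}^{ik(R+x)}-\mathrm{e}^{-ik(R+x)}\big)$ exhibits the incoming amplitude as $-\tfrac{c\sqrt\pi}{2i}\mathrm{e}^{-ikR}$, which is nonzero for any $k\notin\mathbb{R}$ whenever $c\neq0$; and $c=0$ forces $a=b=0$, i.e. the trivial solution. Hence there is no nontrivial solution with purely outgoing lead component for any $k_0\notin\mathbb{R}$, which is precisely the absence of true resonances.

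The one genuinely delicate point is the reduction step: one must be certain that none of the channels other than $m=0$ can contribute a pole in the open lower halfplane. For the high channels this follows from essential self-adjointness at the origin, and for $m=-1$ from the observation that the chosen $U$ turns the corresponding condition into a self-adjoint point interaction on the disc alone, whose resolvent continues to a meromorphic function with real poles. Granting this, everything else is the short explicit computation sketched above, and the Proposition follows.
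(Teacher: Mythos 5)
Your argument is correct and follows essentially the same route as the paper: the same Ansatz $f(x)=a\sin kx+b\cos kx$, $u(r)=r^{-1/2}c\sin k(R-r)$, the same boundary values, and the same conclusion that $f(x)=c\sqrt{\pi}\sin k(R+x)$ always carries a nonzero incoming amplitude for $k\notin\mathbb{R}$, contradicting the purely outgoing condition for a resolvent resonance. The only addition is your explicit justification that the $m=-1$ and $|m+\alpha|>\frac12$ channels cannot produce non-real poles, a point the paper only remarks on informally; this is a welcome clarification but does not change the substance of the proof.
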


\begin{acknowledgements}
The research was supported by the Czech Science Foundation  within the project P203/11/0701 and the project Development of postdoc activities at the University of the Hradec Kr\'alov\'e, CZ.1.07/2.3.00/30.0015.
\end{acknowledgements}

\end{document}